\begin{document}

\title*{Some Remarks on Multisymplectic and Variational Nature of Monge-Ampère Equations in Dimension Four}
\titlerunning{Some Remarks on Multisymplectic and Variational Nature of M-A Equations in 4D} 
\author{Radek Suchánek}
\institute{R. Suchánek  \at Department of Mathematics and Statistics, Masaryk University, Building 08, Kotlářská 2, 611 37 Brno, Czech Republic, 
\at Angevin Laboratory of Mathematical Research - UMR CNRS 6093, University of Angers, 2 Boulevard Lavoisier, 49045, Angers CEDEX 0, France. \\ \email{r.suchanek.r@gmail.com}}
%
%
\maketitle

\abstract{
We describe a necessary condition for the local solvability of the strong inverse variational problem in the context of Monge-Ampère partial differential equations and first-order Lagrangians. This condition is based on comparing effective differential forms on the first jet bundle. To illustrate and apply our approach, we study the linear Klein-Gordon equation, first and second heavenly equations of Plebański, Grant equation, and Husain equation, over a real four-dimensional manifold. Two approaches towards multisymplectic formulation of these equations are described.}


\section*{Introduction}

Since the 19th and early 20th century work of mathematicians such as Joseph Liouville, Gaston Darboux, Sophus Lie, Élie Cartan et al., it is well-known that geometry plays an essential role in the study of ordinary and partial differential equations (PDEs). 

A special subclass of all non-linear second-order PDEs is Monge-Ampère (M-A) equations. They arise in many examples and have numerous applications throughout mathematics and mathematical physics. One can find them in differential geometry of surfaces, hydrodynamics, acoustics, integrability of various geometric structures, variational calculus, Riemannian, CR, and complex geometry, quantum gravity, and even in theoretical meteorology (semi-geostrophic and quasi-geostrophic theory). Many other instances can be listed. For a detailed exposition of interesting applications of M-A equations, particularly in 2D and 3D, see \cite{kushner_lychagin_rubtsov_2006}. 

In this paper, we are mainly interested in the variational structure of M-A equations. In particular, we study whether we can view them as E-L equations for some first-order Lagrangians. Our approach is based on the idea of V. Lychagin to connect the M-A operators with symplectic (on $T^*M$) and contact (on $J^1M)$ geometries. He also defined a class of variational problems related to M-A equations \cite{Lychagin, kushner_lychagin_rubtsov_2006}. 

Afterwards, we observe the relation between M-A equations and multisyplectic geometry, using the results of two slightly different approaches proposed by F. Hélein \cite{multisymplectic-formalims-and-covariant-phase-space}, and D. Harrivel \cite{Dika}. We have found some new aspects which could shed light on this connection. We applied our observations in the context of the following 4D PDEs, very famous for their applications in geometry and theoretical physics related to Einstein gravity and relativistic field theories - Plebański heavenly equations and Klein-Gordon equation. We also considered Grant and Husain equations, which are very close to Plebański second equation.

In 1975, J.F. Plebański introduced his first and second heavenly equations \cite{Plebanski:1975wn}, which belong to the class of M-A equations in 4D. Their close relatives, Grant and Husain equations, were introduced more recently \cite{Grant1993, Husain1993SelfdualGA}. These equations appeared firstly in Einstein gravity, and later were studied by numerous authors, both physicists and mathematicians \cite{Plebanski:1975wn, Grant1993, Husain1993SelfdualGA, Sheftel-Multi-hamiltonian-Plebanski2nd, BANOS2011}. Another significant example of M-A equation is the Klein-Gordon equation, which is a non-homogeneous relativistic wave equation. It was derived in the first quarter of the 20th century by O. Klein and later reformulated in a more compact form by W. Gordon \cite{Gordon}. The underlying structure of this equation can be found in more general situations than scalar fields, and the knowledge of its solutions is relevant in the relativistic perturbative quantum field theory \cite{Klein-Gordon-in-QFT}. The specific form of all the above equations and some further details about them is given below.


In the first section, we define M-A operators and related notions, which will be our main tools in working with M-A equations via differential forms. We also recall the contact and symplectic calculus over $J^1M$, which we greatly utilize in our computations. The second section describes the construction of the Euler operator on $\Omega^n (J^1M)$ and its relation to variational problems. In the third section, a~necessary condition for local solvability of the strong inverse variational problem of a~given M-A equation is formulated, together with the corresponding analysis of the aforementioned five M-A equations in four real dimensions. In the fourth section, we present two multisymplectic approaches and provide certain comparison of them, in the context of concrete M-A equations under consideration.

\newpage 

In the sequel, we will be working with smooth real-valued functions $\phi \in C^\infty(M)$ and their first prolongations $j^1\phi \colon M \to J^1M$, where $J^1M \to J^0M = M \times \mathbb{R} $ is the first jet bundle of $\operatorname{pr}_1 \colon M \times \mathbb{R} \to M $.

A~second-order partial differential equations which are given as a $C^\infty(J^1M)$-linear\footnote{By $C^\infty(J^1M)$-linear we mean that the coefficients can be smooth functions and their first derivatives.} combination of minors of the Hessian matrix $(\phi_{\mu \nu})_{\mu, \nu}$ are called Monge-Ampère equations\footnote{Note that the minors of rank 1 recover all the second-order semi-linear differential equations, whilst the higher order minors (including the determinant of the whole matrix) add specific non-linear terms.} \cite{Lychagin, kushner_lychagin_rubtsov_2006, BANOS2011, Dika}. Consequently, every such equation can be represented by a differential $n$-form on $J^1M$ via M-A operator $\Delta_\omega \phi : = (j^1\phi)^* \omega$. Moreover, one can use \emph{effective} differential forms, which represent M-A equations uniquely (up to a multiple of a non-vanishing function), and without terms corresponding to trivial equations \cite{Lychagin, kushner_lychagin_rubtsov_2006, BANOS2011}. Effective forms on the first jet space, which produce first-order Lagrangians on the base manifold, have a particularly simple local expression. Their image under the Euler operator represents the Euler-Lagrange (E-L) equations \cite{kushner_lychagin_rubtsov_2006}. This feature of the Euler operator, together with the fact that it preserves the effective forms, enables us to study the existence of a first-order Lagrangian for a given M-A equation on the level of differential forms over $J^1M$. Additionally, some effective forms give rise (in a non-unique way) to multisymplectic forms \cite{Dika}. This may happen even for an effective form that comes from a M-A equation which does not have a first-order Lagrangian. Since the multisymplectic reformulation usually starts with a Lagrangian \cite{multisymplectic-formalims-and-covariant-phase-space, ObservableFormsAndFunctionals, firstOrderTheoriesAndPremultisymplectic}, this seems to be an interesting property. We will apply the formalism on the following M-A equations: Plebański heavenly, Grant, Husain, and Klein-Gordon equations. We will consider these equations in the real 4D case. 

The heavenly equations of Plebański were first derived in \cite{Plebanski:1975wn} in the form
    \begin{eqnarray*}
        \phi_{13} \phi_{24}  - \phi_{14}\phi_{23}  & = 1  \text{     (1st heavenly equation)} \\
        \phi_{11} \phi_{22}  - (\phi_{12})^2 + \phi_{13} +  \phi_{24} & = 0  \text{     (2nd heavenly equation)} 
    \end{eqnarray*}
using self-dual 2-forms over a complex 4D Riemannian space. The duality here is given by the Hodge star operator. The Grant equation and the Husain equation are both based on the Ashtekar-Jacobson-Smolin (AJS) equations, which are Einstein self-dual equations. The AJS equations were derived in \cite{Ashtekar-Jacobson-Smolin} employing the $3+1$ ADS decomposition of spacetime. They characterize 4D complex metrics with self-dual curvature 2-form. Metrics with self-dual curvature form satisfy the vacuum equations of general relativity since they are Ricci flat. In \cite{Grant1993}, the following equation was introduced 
    \begin{equation*}
        \phi_{11} + \phi_{24} \phi_{13} - \phi_{23} \phi_{14} = 0  \text{     (Grant equation)} 
    \end{equation*}
and subsequently rewritten into a system which enabled the author to construct formal solutions. Notably, \emph{the Grant equation is equivalent with the first heavenly equation of Plebański} \cite{Grant1993}. Another reformulation of the AJS equations was provided in \cite{Husain1993SelfdualGA}, in order to identify AJS with a 2D chiral model, and to provide a~Hamiltonian formulation. The resulting equation
   \begin{equation*}
        \phi_{13} \phi_{24} - \phi_{14} \phi_{23} + \phi_{11} + \phi_{22} = 0 \text{     (Husain equation)} 
    \end{equation*} 
enabled V. Husain to show the existence of infinitely many non-local conserved currents. 
Another type of an M-A equation is 
    \begin{equation*}
        \phi_{11} - \phi_{22} - \phi_{33} - \phi_{44} + m^2 \phi^2 = 0 \text{     (Klein-Gordon equation)} 
    \end{equation*}
where $m$ is a constant. The Klein-Gordon equations was derived in various ways, for example by W. Gordon \cite{Gordon}. In its real version, it can be interpreted as an equation of motion for a scalar field without charge over a Lorentzian manifold. A key difference between the aforementioned equations is that the Klein-Gordon equation does not arise from self-duality conditions.


\section{Preliminary notions}

\ \ \ \ In this section we fix the notation and introduce basic definitions and statements relevant to our considerations. In particular, we will define the notion of effective forms, Monge-Ampère operators and Monge-Ampère equations. All our considerations are local. We caution the reader about the standard abuse of notation such us denoting a symplectic form by $\Omega$, and by $\Omega(M)$ the exterior algebra of differential forms over $M$. 

\hfill 

We denote by $M$ a smooth $n$-dimensional manifold, $(q^1, \ldots, q^n)$ are local coordinates over an open subset $U \subset M$, $TM$ and $T^*M$ are the tangent and cotangent bundle, respectively. Let $J^1M$ be the space of $1$-jets of smooth functions over $M$, which is an affine bundle over $M \times \mathbb{R}$
    \begin{equation*}
        \pi \colon J^1M \to J^0 M = M \times \mathbb{R}
    \end{equation*}
with typical fiber $T^*M$. It is also a fiber bundle over $M$
    \begin{equation*}
        \operatorname{pr}_1 \circ \pi \colon J^1M \to M \ , 
    \end{equation*}
where $\operatorname{pr}_1 \colon M \times \mathbb{R} \to M$. We denote by $(q^1, \ldots, q^n, u, p_1, \ldots, p_n)$ the induced local coordinates on $J^1M$. The first prolongation of $\phi \in C^\infty (M)$\footnote{Each $\phi \in C^\infty(M)$ defines a section $ M \to M \times \mathbb{R}$, $x \mapsto (x, \phi (x))$.} is a section $j^1 \phi \colon M \to J^1M$, given by $x \mapsto (j^1 \phi)(x) \in J^1 M$. Recall that $(j^1 \phi) (x)$ is an equivalence class of functions which are equal up to the first order in derivatives at $x$. In local coordinates, 
    \begin{equation*}
        j^1\phi = (q^\mu, \phi, \phi_\mu ) \ ,
    \end{equation*}
where $\phi_\mu : = \partial_{q^\mu} \phi : = \frac{\partial \phi}{\partial_{q^\mu}} $ is the partial derivative in the direction of the coordinate $q^\mu$. The pullbacks of coordinate functions on $J^1M$ are
    \begin{align*}
        (j^1 \phi)^* q^\mu & = q^\mu & (j^1 \phi)^* u & = \phi &  (j^1 \phi)^* p_\mu & = \phi_\mu \ ,
    \end{align*}

In the local coordinates, we have the identification $J^1U \cong T^*U \times \mathbb{R} $ (which is not canonical). Most relevant for us is that $J^1M$ is naturally equipped with a contact structure \cite{Lychagin, kushner_lychagin_rubtsov_2006}. For more details about jet bundles and structures on them, see \cite{Natural-operations}.


\subsection{Contact structure on $J^1M$}

\begin{definition}\label{contact structure - definition}
Let $\omega \in \Omega^1 (M)$ be non-vanishing. Let $\mathcal{D}\subset TM$ be a distribution given by $\mathcal{D}: = \ker \omega$. Then $\omega$ is called a contact form on $M$, if $d \omega|_{\mathcal{D}} \colon \mathcal{D}\to \mathcal{D}^*$ is non-degenerate. Manifold with a distribution described by a contact form is called a~contact manifold and $\D$ is called a contact structure (or contact distribution) on $M$.
\end{definition}

\begin{remark}
Note that the distribution $\mathcal{D}= \ker \omega$ satisfies $\operatorname{codim} \mathcal{D}= 1$. Moreover, the $1$-form describing $\mathcal{D}$ is not unique. Consider a class of $1$-forms, $[\omega]$, given by $\tilde{\omega} \in [\omega]$ if and only if there is a non-vanishing $f \in C^\infty (M)$ s.t. $\tilde{\omega} = f \omega$. Then every representative of the class $[\omega]$ defines the same distribution $\D$. 
\end{remark}

The first jet space comes equipped with the \emph{Cartan distribution}, which infinitesimally describes the condition that a section of $J^1M \to M$ is obtained as a prolongation of a~function $\phi \in C^\infty(M)$. In the induced coordinates, this requirement can be described by the following \emph{contact form}\footnote{We are using the summation convention of summing over the repeated indices.} 
   \begin{equation}\label{contact 1-form in coordinates}
        \mathfrak{c} = \D u - p_\mu \D q^\mu \ .
    \end{equation}
This $1$-form satisfies the definition \ref{contact structure - definition} and we can describe the Cartan distribution as $\mathcal{C} = \ker \mathfrak{c}$. That is, $J^1M$ is a contact manifold\footnote{Cartan distribution exists also on higher jets but the first jets are special due to $\operatorname{codim} \mathcal{C} = 1$.}. By the Darboux theorem, every contact form on $J^1M$ is locally given by \eqref{contact 1-form in coordinates}. The contact form defines the Reeb vector field, $\chi$, by the following conditions 
    \begin{equation}\label{Reeb field conditions}
        \chi \mathbin{\lrcorner} \D \mathfrak{c} = 0 \text{ and }  \mathfrak{c}(\chi) = 1 \ . 
    \end{equation}
In the local coordinates s.t. \eqref{contact 1-form in coordinates} holds, the Reeb field is of the form $\chi = \partial_u$, which immediately follows from \eqref{Reeb field conditions}. Moreover, since $\operatorname{codim} \mathcal{C} = 1$, we get the following splitting of $T J^1 U$
    \begin{equation*}
        T J^1 U \cong \mathcal{C} \oplus \operatorname{span}(\chi) \cong \ker \mathfrak{c} \oplus \ker \D \mathfrak{c} \ .
    \end{equation*}


\subsection{Symplectic calculus on the Cartan distribution}

Contact form on $J^1M$ gives rise to a \emph{symplectic form} on $\mathcal{C}$. 
\begin{definition}\label{symplectic form - definition}
Let $V$ be a vector space,  $\dim V = 2n$. A symplectic form on $V$ is a~$2$-form $\Omega \in \Lambda^2(V^*)$, which is non-degenerate, i.e. $\Omega^{n} : = \Omega \wedge \ldots \wedge \Omega$ is non-vanishing.
\end{definition}

Consider the $2$-form $\Omega : = \D \mathfrak{c}$ on the contact manifold $J^1M$. Then $\Omega$ is obviously closed. In the chosen coordinates, we have
    \begin{equation}\label{symplectic 2-form in coordinates}
        \Omega  = \D q^\mu \wedge \D p_\mu \ . 
    \end{equation}
Note that $\Omega$ is non-degenerate when restricted to $\mathcal{C}$. This means that $\Omega_x$ is a symplectic form on $\mathcal{C}_x$ at every $x \in M$. Using the symplectic form, we can define various useful operators. This leads to considering the space of differential $k$-forms which are degenerate along the Reeb field $\chi$. We will denote this $C^\infty$-module by 
    \begin{equation}\label{forms degenerating along the Reeb field}
        \Omega^k (\mathcal{C}) : = \{ \alpha \in \Omega^k (J^1U) \ | \ \chi \mathbin{\lrcorner} \alpha = 0 \} \ . 
    \end{equation}
Since the interior product $\mathbin{\lrcorner}$ satisfies the graded Leibniz rule with respect to the wedge product, the space 
    \begin{equation*}
        \Omega (\mathcal{C}) : = \underset{k \leq 0}{\bigoplus} \Omega^k (\mathcal{C}) \subset \Omega (J^1M)
    \end{equation*}
has a graded algebra structure. Using suitable projections, $\Omega (\mathcal{C})$ can be turned into a~differential graded algebra.

\textbf{Projection and projected derivative.} Every $\alpha \in \Omega^k (J^1M)$ can be projected on $\Omega^k (\mathcal{C})$ via the projection $p \colon \Omega^k (J^1 M) \to \Omega^k (\mathcal{C})$, acting on arbitrary $k$-form $\alpha$ as 
    \begin{equation}\label{projection operator}
        p(\alpha) = \alpha - \mathfrak{c} \wedge (\chi \mathbin{\lrcorner} \alpha) \ . 
    \end{equation}
Let us show that $p$ has the claimed properties. Firstly, $p^2 = p$, since
    \begin{equation*}
        p ( p (\alpha) ) = \alpha - \mathfrak{c} \wedge (\chi \mathbin{\lrcorner} \alpha) - \mathfrak{c} \wedge \bigl( \chi \mathbin{\lrcorner} ( \alpha - \mathfrak{c} \wedge (\chi \mathbin{\lrcorner} \alpha) \bigr)  =  p (\alpha) \ . 
    \end{equation*}
Secondly, $p (\alpha) \in \Omega^k (\mathcal{C}) $, since 
   \begin{equation*}
        \chi \mathbin{\lrcorner} p(\alpha ) =  \chi \mathbin{\lrcorner} \alpha - \chi \mathbin{\lrcorner} \alpha + (\chi \wedge \chi) \mathbin{\lrcorner} \alpha \wedge \mathfrak{c} = 0 \ .
    \end{equation*}
Note that the property $\alpha \in \Omega (\mathcal{C})$ is not preserved by the exterior derivative $\D \colon \Omega^k (J^1 M) \to \Omega^{k+1} (J^1 M)$. So with the projection $p$, we define the degree $1$ derivation $\D_p$ as the composition
    \begin{equation}\label{projected exterior derivative - definition}
       \D_p: = p  \circ \D \colon \Omega^k (J^1 M) \to  \Omega^{k+1} (\mathcal{C}) \ ,
    \end{equation}

\textbf{Bottom operator.} Since $\Omega$ is non-degenerate on $\mathcal{C}$, the assignment $\xi \mapsto \xi \mathbin{\lrcorner} \Omega$ defines an isomorphism $\iota \colon \mathcal{C} \to \mathcal{C}^*$, which further induces an isomorphism $\Lambda^2 \iota^{-1} \colon \Lambda^2 \mathcal{C}^* \to \Lambda^2 \mathcal{C}$. This enables us to define $X_{\Omega} : = \Lambda^2 \iota^{-1} (\Omega)$. In coordinates,
    \begin{equation*}
        X_{\Omega} =  \partial_{q^\mu} \wedge \partial_{p_\mu} \ . 
    \end{equation*}
Contracting with the $2$-vector field $X_\Omega$ leads to the  \emph{bottom operator} $\bot \colon \Omega^k (J^1 U) \to \Omega^{k-2} (J^1 U) $. More precisely, for $k$-form $\alpha$, $k > 1$,
    \begin{equation}\label{bottom operator}
       \bot \alpha : = X_{\Omega} \mathbin{\lrcorner} \alpha \ . 
    \end{equation}
For $k \leq 1$ define $\bot \alpha = 0$. Our convention is such that $\bot \Omega = \partial_{p_\mu} \mathbin{\lrcorner} \partial_{q^\mu} \mathbin{\lrcorner} (\D q^\mu \wedge \D p_\mu) = n $. The motivation for defining the bottom operator will be more apparent in the next paragraphs.


\subsection{Monge-Ampère operators and effective forms}

\begin{definition}
Let $\omega \in \Omega^n(J^1M)$ be an arbitrary $n$-form, $n = \dim M$. The Monge-Ampère operator corresponding to $\omega$, $\Delta_\omega \colon C^\infty (M) \to \Omega^n(M)$,
is defined as 
    \begin{equation}\label{Monge-Ampere operator}
        \Delta_{\omega} \phi : = (j^1\phi)^* \omega \ .
    \end{equation}
The differential equation 
    \begin{equation}
        \Delta_\omega \phi = 0
    \end{equation}
is called a Monge-Ampère equation. 
\end{definition}

Notice that the expression $\Delta_\omega \phi = 0$ defines an equation on $M$ only when $\omega$ is a~$n = \dim M$-form. In this way, the M-A operators enable us to represent M-A equations by differential forms. Note that we have a certain ambiguity in this representation due to 
    \begin{equation*}
        (j^1\phi)^* \mathfrak{c} = \D \phi - \phi_\mu \D q^\mu = 0 \ . 
    \end{equation*}
In full generality, this ambiguity is described by an ideal of the exterior algebra over $J^1M$, generated by the contact form and its exterior derivative
    \begin{equation}\label{ideal generated by the contact form}
        \mathcal{I} = < \mathfrak{c} , \D \mathfrak{c} > \subset \Omega (J^1M) \ .
    \end{equation}
Recall that $\Omega (J^1M)$ is a graded algebra, which implies that $\mathcal{I}$ is a graded ideal
    \begin{equation*}
        \mathcal{I}^k : = \mathcal{I} \cap \Omega^k (J^1M) \ .
    \end{equation*}
Thus, the redundancy in M-A equations is given by
    \begin{equation}\label{graded part of ideal generated by the contact form}
        \omega \in \mathcal{I}^n \iff \Delta_{\omega} \phi = 0 \ \forall \phi \ .
    \end{equation}
This suggest to work with the equivalence classes of $\Omega^n(J^1M) / \mathcal{I}^n$ instead of using arbitrary forms in $\Omega^n(J^1M)$ to describe M-A equations on $M$. Nevertheless, such an approach is not very convenient for computations in local coordinates. To avoid this problem, we use the following definition of \emph{effective forms}, which captures the above idea of working with forms which do not contain the redundant terms.
    
\begin{definition}\label{effective forms - definiton via bottom operator}
Let $\omega \in \Omega^k (J^1M)$, $k \leq n$. Then $\omega$ is called effective, if 
    \begin{equation}\label{effective forms - working definition}
        \chi \mathbin{\lrcorner} \omega = 0 \text{ and } \bot \omega = 0 \ .
    \end{equation}
\end{definition}
For further details about effective forms and how the above definition can be linked with the equivalence classes of $\Omega^n(J^1M) / \mathcal{I}^n$, see \cite{Lychagin, kushner_lychagin_rubtsov_2006}.

Recall that $\chi \mathbin{\lrcorner} \omega = 0$ means $\omega \in \Omega^k (\mathcal{C})$ (see \eqref{forms degenerating along the Reeb field}). The conditions \eqref{effective forms - working definition} will be our working definition when dealing with effective forms. Note also that the condition $\bot \omega = 0$ is equivalent to $\Omega \wedge \omega = 0$ if and only if $n = k$. 

\begin{example}{Example 1.3} \label{effective forms - example}
Let $\beta = \D q^1 \wedge \D q^2 \wedge \ldots \wedge \D q^n$ and $ \beta_\mu : = \partial_{q^\mu} \mathbin{\lrcorner} \beta$. Then 
    \begin{equation*}
        \omega = b_\mu \beta_\mu \wedge \D p_\mu + b \beta
    \end{equation*}
is effective for arbitrary choice of $b, b_\mu \in C^\infty(J^1M)$, $\mu = 1, \ldots , n$. Indeed, $\omega$ does not contain the $\D u$ term, hence we have $\chi \mathbin{\lrcorner} \omega = 0$. Next, we have
    \begin{equation*}
        \bot \omega =  b_\mu \bot (\beta_\mu \wedge \D p_\mu ) + b \bot \beta 
    \end{equation*}
due to $C^\infty(J^1M)$-linearity of the interior product $\chi \mathbin{\lrcorner}$. Recall that we use the summation convention, so $\beta_\mu \wedge \D p_\mu$ consists of $n$ terms. The first one is $\beta_1 \wedge \D p_1 =  \D q^2 \wedge \ldots \wedge \D q^n \wedge \D p_1$. The bottom operator gives 
    \begin{equation*}
        \bot (\beta_1 \wedge \D p_1) = (\partial_{q^\mu} \wedge \partial_{p_\mu}) \mathbin{\lrcorner} (\beta_1 \wedge \D p_1) =  \partial_{p_\mu} \mathbin{\lrcorner} \partial_{q^\mu}  \mathbin{\lrcorner} (\beta_1 \wedge \D p_1) = \partial_{q^1} \mathbin{\lrcorner}  \beta_1 = 0 \ . 
    \end{equation*}
Similarly for all the other terms of $\beta_\mu \wedge \D p_\mu$. Obviously $\bot \beta = 0$ since $\beta$ does not contain any $\D p$ term. We see that $\omega$ is effective. Notice that the coefficients of $\omega$ might depend on $u$.  
\end{example}

Important result in the theory of effective forms is the Hodge-Lepage decomposition, proved by V. Lychagin in \cite{Lychagin} using the representation theory of $\mathfrak{sl}_2(\mathbb{R})$. 

\begin{theorem}
Every $\omega \in \Omega^k (\mathcal{C}), k \leq n$, can be written in the form
    \begin{equation}\label{Hodge-Lepage decomposition}
        \omega = \omega_{\epsilon} + x \wedge \Omega \ ,
    \end{equation}
for some $x \in \Omega^{k-2} (\mathcal{C})$ and a uniquely given $\omega_\epsilon \in \Omega^k (\mathcal{C})$ satisfying $\bot \omega_\epsilon = 0$.
\end{theorem}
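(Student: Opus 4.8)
The plan is to realise \eqref{Hodge-Lepage decomposition} as the Lefschetz decomposition of symplectic linear algebra, carried out fibrewise on the symplectic vector space $(\mathcal{C}_x, \Omega_x)$ of dimension $2n$ and then reassembled smoothly. Since the assertion is purely algebraic in each fibre, I would fix a point $x$ and work in the exterior algebra of $\mathcal{C}_x^*$. The three relevant operators are the \emph{raising} operator $L \colon \Omega^k(\mathcal{C}) \to \Omega^{k+2}(\mathcal{C})$, $L\alpha := \Omega \wedge \alpha$ (which does land in $\Omega(\mathcal{C})$ because $\chi \mathbin{\lrcorner} \Omega = 0$ and $\chi \mathbin{\lrcorner} \alpha = 0$); the \emph{lowering} operator, which is exactly the bottom operator $\bot$ from \eqref{bottom operator}; and a \emph{degree} operator $H$ acting on $\Omega^k(\mathcal{C})$ by the scalar $k-n$. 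By Definition \ref{effective forms - definiton via bottom operator} the effective (primitive) forms in $\Omega^k(\mathcal{C})$ are precisely those in $\ker \bot$, so the claim is that any $\omega$ with $k \le n$ splits as a primitive part plus $L$ of a lower-degree form.

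The computational heart is to verify that $(L, \bot, H)$ furnish a representation of $\mathfrak{sl}_2(\mathbb{R})$, i.e.\ that they satisfy
    \begin{equation*}
        [L, \bot] = H \ , \qquad [H, L] = 2L \ , \qquad [H, \bot] = -2\bot \ .
    \end{equation*}
The last two relations are immediate from degree bookkeeping, since $L$ raises and $\bot$ lowers the form degree by $2$. The first relation is the genuine identity, and I would check it in the Darboux coordinates where $\Omega$ takes the form \eqref{symplectic 2-form in coordinates} and $X_\Omega = \partial_{q^\mu}\wedge\partial_{p_\mu}$, by commuting the wedging with the double contraction and tracking the combinatorial terms counting how many $\Omega$-pairs $\alpha$ already contains; the normalisation $\bot\Omega = n$ reproduces the scalar $k-n$. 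This is the only place I expect real work to lie. Granting the relations, the finite-dimensional $\mathfrak{sl}_2$-module $\Lambda^\bullet \mathcal{C}_x^*$ is completely reducible, hence a direct sum of irreducibles.

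In each irreducible summand there is a unique lowest-weight vector, annihilated by the lowering operator $\bot$, while every other weight vector lies in the image of $L$. Collecting the lowest-weight components across all summands produces a primitive form $\omega_\epsilon$ with $\bot\omega_\epsilon = 0$, and the remainder lies in $\operatorname{im} L = \Omega \wedge \Omega^{k-2}(\mathcal{C})$; writing it as $x \wedge \Omega$ with $x \in \Omega^{k-2}(\mathcal{C})$ yields \eqref{Hodge-Lepage decomposition}. For uniqueness I would use that, in the range $k \le n$, a nonzero primitive $k$-form can never lie in $\operatorname{im} L$: as a lowest-weight vector of weight $k-n \le 0$ it is not in the image of the raising operator within a finite-dimensional $\mathfrak{sl}_2$-module. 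Thus if $\omega_\epsilon + x \wedge \Omega = \omega_\epsilon' + x' \wedge \Omega$ with both summands primitive, the difference $\omega_\epsilon - \omega_\epsilon' = (x'-x)\wedge\Omega$ is simultaneously primitive and in $\operatorname{im} L$, forcing $\omega_\epsilon = \omega_\epsilon'$. Finally, smoothness is automatic: the projection onto the primitive part is a universal polynomial in $L$ and $\bot$ with scalar coefficients depending only on $k$ and $n$, so it is $C^\infty(J^1M)$-linear and returns a genuine element of $\Omega^k(\mathcal{C})$ rather than merely a pointwise family.
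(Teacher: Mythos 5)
Your proof is correct and takes essentially the same approach as the paper: the paper states this theorem without reproducing a proof, citing instead Lychagin's argument ``using the representation theory of $\mathfrak{sl}_2(\mathbb{R})$,'' which is precisely the fibrewise Lefschetz-type decomposition you carry out with the triple $(L, \bot, H)$, complete reducibility, and the lowest-weight characterisation of effective forms.
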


\begin{corollary}\label{lemma for the necessary condition theorem}
Suppose that $\omega_1, \omega_2 \in \Omega^n (\mathcal{C})$ determine the same Monge-Ampère equation. Then the effective parts satisfy
    \begin{equation}\label{effective forms of the same equation are scalar multiple of each other}
        \omega_{1\epsilon} = k\omega_{2\epsilon}
    \end{equation}
for a non-vanishing function $k \in C^\infty (J^1M)$. 
\end{corollary}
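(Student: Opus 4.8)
The plan is to derive the statement from the uniqueness clause of the Hodge–Lepage decomposition \eqref{Hodge-Lepage decomposition}, after first recording that the Monge–Ampère operator only detects the effective part of a form. Since $\Omega = \D\mathfrak{c}$, for any first prolongation one has $(j^1\phi)^*\Omega = (j^1\phi)^*\D\mathfrak{c} = \D\bigl((j^1\phi)^*\mathfrak{c}\bigr) = 0$, so that $(j^1\phi)^*(x\wedge\Omega) = 0$ for every $x\in\Omega^{n-2}(\mathcal{C})$, whence $\Delta_\omega\phi = \Delta_{\omega_\epsilon}\phi$. Thus the equation attached to $\omega_i$ depends only on $\omega_{i\epsilon}$, which is exactly why the conclusion is phrased in terms of effective parts.

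Next I would make the hypothesis precise by means of \eqref{graded part of ideal generated by the contact form}: the graded ideal $\mathcal{I}^n$ is precisely the space of forms yielding the trivial equation, and multiplying a representative by a non-vanishing $k\in C^\infty(J^1M)$ rescales $\Delta_\omega\phi$ by the non-vanishing factor $k\circ j^1\phi$ without altering its zero set. Reading ``$\omega_1$ and $\omega_2$ determine the same M–A equation'' in this sense means $\omega_1 - k\,\omega_2 \in \mathcal{I}^n$ for some non-vanishing $k$. Writing $\omega_i = \omega_{i\epsilon} + x_i\wedge\Omega$ via Hodge–Lepage, I would then observe that $\eta := \omega_{1\epsilon} - k\,\omega_{2\epsilon}$ is again effective: both $\chi\mathbin{\lrcorner}$ and $\bot = X_\Omega\mathbin{\lrcorner}$ are $C^\infty(J^1M)$-linear, so $\chi\mathbin{\lrcorner}\eta = 0$ and $\bot\eta = 0$. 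Hence $\omega_1 - k\,\omega_2 = \eta + (x_1 - k\,x_2)\wedge\Omega$ is a Hodge–Lepage decomposition whose effective part is $\eta$, and it lies in $\Omega^n(\mathcal{C})$ because $\omega_1,\omega_2$ do.

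The crux is the lemma that \emph{an effective form contained in $\mathcal{I}^n$ must vanish}. To prove it I would take $\zeta\in\mathcal{I}^n\cap\Omega^n(\mathcal{C})$, write $\zeta = \mathfrak{c}\wedge a + \Omega\wedge b$, and apply the projection \eqref{projection operator}. Using the Reeb relations \eqref{Reeb field conditions}, i.e. $\mathfrak{c}(\chi)=1$ and $\chi\mathbin{\lrcorner}\Omega=0$, a short computation gives $p(\mathfrak{c}\wedge a)=0$ and $p(\Omega\wedge b)=\Omega\wedge p(b)$; since $\chi\mathbin{\lrcorner}\zeta=0$ forces $p(\zeta)=\zeta$, this exhibits $\zeta = p(b)\wedge\Omega$ with $p(b)\in\Omega^{n-2}(\mathcal{C})$. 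Therefore the Hodge–Lepage decomposition of $\zeta$ has effective part $0$, and by the uniqueness of the effective part in \eqref{Hodge-Lepage decomposition} any effective $\zeta$ satisfies $\zeta=0$. Applying this to $\zeta = \omega_1 - k\,\omega_2$ and comparing with the decomposition of the previous paragraph forces $\eta = 0$, that is $\omega_{1\epsilon} = k\,\omega_{2\epsilon}$.

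I expect the main obstacle to be exactly this last lemma — controlling the effective part of an arbitrary member of the graded ideal $\mathcal{I}^n$ — together with fixing the intended reading of ``determine the same M–A equation'' so that the non-vanishing factor $k$ is supplied by the hypothesis, rather than having to be extracted from mere equality of solution sets (a rigidity statement not available from Hodge–Lepage alone). Everything else reduces to the $C^\infty$-linearity of the contraction operators and the uniqueness asserted in \eqref{Hodge-Lepage decomposition}.
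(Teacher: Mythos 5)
Your proposal is correct, and its skeleton coincides with the paper's proof: reduce the hypothesis to $\omega_1 - k\,\omega_2 \in \mathcal{I}^n$ for a non-vanishing $k \in C^\infty(J^1M)$, then conclude from the vanishing of effective parts of members of $\mathcal{I}^n$, $C^\infty$-linearity, and the uniqueness clause of \eqref{Hodge-Lepage decomposition}. The differences lie in what is proved versus merely cited. The paper arrives at the ideal membership from the identity $\Delta_{\omega_1}\phi = \tilde{k}\,\Delta_{\omega_2}\phi$ for all $\phi$, using the $\mathbb{R}$-linearity and $C^\infty(J^1M)$-equivariance of $\Delta$ in the $\omega$ argument, and then simply asserts the two key facts that every $\alpha \in \mathcal{I}^n$ satisfies $\alpha_\epsilon = 0$ and that $(k\omega)_\epsilon = k\,\omega_\epsilon$. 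You instead take the ideal-membership reading of ``determine the same equation'' as the hypothesis itself --- reasonably flagging that extracting a single $k$ on $J^1M$ from a $\phi$-wise proportionality (which is what the paper does when it lifts $\tilde{k} \in C^\infty(M)$ to a $k$ with $(j^1\phi)^*k = \tilde{k}$ for all $\phi$) is a point needing care that Hodge--Lepage alone does not settle --- and in exchange you prove both cited facts: the projection identities $p(\mathfrak{c}\wedge a) = 0$ and $p(\Omega\wedge b) = \Omega\wedge p(b)$ show that any element of $\mathcal{I}^n \cap \Omega^n(\mathcal{C})$ equals $p(b)\wedge\Omega$ and hence has zero effective part, while the $C^\infty$-linearity of $\chi\mathbin{\lrcorner}$ and $\bot$ gives effectivity of $\omega_{1\epsilon} - k\,\omega_{2\epsilon}$, so the uniqueness of the effective part forces it to vanish. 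In short, your argument is a rigorous completion of the paper's: the paper's version keeps contact with the equations themselves through the operator $\Delta$, whereas yours is self-contained at the level of forms, with the Hodge--Lepage theorem as the only unproved input.
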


\begin{proof}
Two forms determine the same equation if and only if for all $\phi$
    \begin{equation}\label{intermediate step 3}
         \Delta_{\omega_1} \phi = \tilde{k} \Delta_{\omega_2} \phi \ ,
    \end{equation}
for some non-vanishing $\tilde{k} \in C^\infty(M)$. Notice that $\Delta$ is $C^\infty(J^1 M)$-equivariant in the $\omega$ argument, i.e. for arbitrary $\omega$ and $k \in C^\infty(J^1M)$ we have\footnote{Note that $(j^1\phi)^*k = k \circ j^1\phi$ since $k$ is a function.}
    \begin{equation*}
        \Delta_{k \omega} \phi = \bigl((j^1\phi)^*k\bigr) \Delta_{\omega} \phi  \ .
    \end{equation*}
Moreover, $\Delta$ is $\mathbb{R}$-linear in the lower argument, so for arbitrary $\omega_1, \omega_2$, and all $\phi$
    \begin{equation*}
        \Delta_{\omega_1} \phi - \Delta_{\omega_2} \phi = \Delta_{\omega_1 - \omega_2} \phi \ .
    \end{equation*}
Hence \eqref{intermediate step 3} can be rewritten as
    \begin{equation*}
        \Delta_{\omega_1} \phi - \tilde{k}  \Delta_{\omega_2} \phi =  \Delta_{\omega_1 - k \omega_2} \phi = 0 \ , 
    \end{equation*}
for appropriate $k \in C^\infty(J^1 M)$ s.t. $(j^1\phi)^*k = \tilde{k}$. The above equation holds for all $\phi$ if and only if
    \begin{equation*}
        \alpha : = \omega_1 - k \omega_2 \in \mathcal{I}^n
    \end{equation*}
(see \eqref{graded part of ideal generated by the contact form}). Since every $\alpha \in \mathcal{I}^n$ satisfies $\alpha_\epsilon = 0$ and every $\omega \in \Omega (\mathcal{C})$ satisfies $(k\omega)_\epsilon = k \omega_\epsilon$, we conclude $\omega_{1\epsilon} = k\omega_{2\epsilon}$. 
\end{proof}

Using the projection operator \eqref{projection operator} together with the Hodge-Lepage decomposition, we know that every $k$-form $\omega$ on $J^1M$ has a unique effective part $\omega_\epsilon$ (of the same degree). This means that every M-A equation $\Delta_\omega \phi = 0$ can be represented by a~unique differential form which does not contain terms generating trivial equation. We will use this observation in order to study the variational nature of the PDEs under consideration.


\section{Lagrangians, variational problems and the Euler operator}

Taking the pullback of a $n$-form on the jet space results in a $n$-form on the base manifold $M$, which can be integrated over $M$. Let $\phi$ be compactly supported, $\omega \in \Omega^n(J^1 M)$. Define the (action) functional corresponding to $\Delta_{\omega} \phi$ by
    \begin{equation}\label{M-A functional}
        \Phi_\omega [\phi] = \int_M \Delta_{\omega} \phi \ .
    \end{equation}

\begin{definition}\label{first-order Lagrangian - definition}
We call an element $\omega \in \Omega^n(J^1 M)$ a Lagrangian. A first-order Lagrangian is a $n$-form $\omega$ such that $\Delta_\omega \phi$ depends on $\phi$ up to the first order. 
\end{definition}
 

\subsection{First-order Lagrangians.}

We are focused on the first-order Lagrangians as defined in \ref{first-order Lagrangian - definition} because they yield all possible first-order Lagrangian functions on $M$\footnote{after the pullback by $(j^1\phi)^*$ and choice of the volume form on $M$}. The following lemma describes the most general form the first-order Lagrangians can have.

\begin{proposition}\label{first-order Lagrangians - proposition}
Every effective first-order Lagrangian for one scalar field $\phi$ is locally of the form 
    \begin{equation}\label{first-order Lagrangian - general form}
        L \beta = L(q^\mu, u, p_\mu) \D q^1 \wedge \ldots \wedge \D q^n \ .
    \end{equation}
for some $L \in C^\infty (J^1M)$.  
\end{proposition}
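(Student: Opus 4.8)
The plan is to work in the induced coordinates $(q^\mu,u,p_\mu)$ and to decompose an effective $n$-form by its number of $\D p$ factors. The first effectivity condition $\chi\mathbin{\lrcorner}\omega=\partial_u\mathbin{\lrcorner}\omega=0$ says that $\omega$ contains no $\D u$, so $\omega$ is built only from $\D q^\mu$ and $\D p_\mu$. Writing $\omega=\sum_{r=0}^{n}\omega^{(r)}$, where $\omega^{(r)}$ collects the monomials with exactly $r$ factors $\D p$, the crucial observation is that $(j^1\phi)^*\D p_\mu=\D\phi_\mu=\phi_{\mu\nu}\D q^\nu$ is linear in the second derivatives, while $(j^1\phi)^*\D q^\mu=\D q^\mu$ carries none. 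Hence $(j^1\phi)^*\omega^{(r)}$ is precisely the part of $\Delta_\omega\phi$ that is homogeneous of degree $r$ in the Hessian $(\phi_{\mu\nu})$. Since distinct degrees cannot cancel one another, $\omega$ is a first-order Lagrangian if and only if $(j^1\phi)^*\omega^{(r)}=0$ for every $r\ge 1$ and every $\phi$; the surviving term $\omega^{(0)}=L\beta$ is then the asserted local form, so it remains to show that effectivity forces $\omega^{(r)}=0$ for all $r\ge1$.

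The model is $r=1$, which already exhibits why both effectivity conditions are needed. Write $\omega^{(1)}=c^{\mu\nu}\beta_\mu\wedge\D p_\nu$ with $c^{\mu\nu}\in C^\infty(J^1M)$ and $\beta_\mu=\partial_{q^\mu}\mathbin{\lrcorner}\beta$. Using $\D q^\sigma\wedge\beta_\mu=\delta^\sigma_\mu\beta$ one finds $(j^1\phi)^*\omega^{(1)}=(-1)^{n-1}(c^{\mu\nu}\circ j^1\phi)\,\phi_{\nu\mu}\,\beta$. Because $\phi_{\mu\nu}=\phi_{\nu\mu}$ and an arbitrary symmetric Hessian is realised by some $\phi$, vanishing of this expression forces the symmetric part of $c^{\mu\nu}$ to vanish, i.e.\ $c^{\mu\nu}$ is antisymmetric. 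On the other hand, a direct contraction gives $\bot(\beta_\mu\wedge\D p_\nu)=(-1)^{n}\beta_{\mu\nu}$, where $\beta_{\mu\nu}:=\partial_{q^\nu}\mathbin{\lrcorner}\partial_{q^\mu}\mathbin{\lrcorner}\beta$, so the second effectivity condition reads $\bot\omega^{(1)}=(-1)^n c^{\mu\nu}\beta_{\mu\nu}=0$. As the $\beta_{\mu\nu}$ with $\mu<\nu$ are linearly independent and $\beta_{\mu\nu}$ is antisymmetric, this forces $c^{\mu\nu}$ to be symmetric. The two conditions together give $c^{\mu\nu}=0$, hence $\omega^{(1)}=0$.

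For general $r$ I would run the same symmetric/antisymmetric dichotomy. Expanding $\omega^{(r)}$ in the basis $\beta_{\mu_1\cdots\mu_{n-r}}\wedge\D p_{\nu_1}\wedge\cdots\wedge\D p_{\nu_r}$, the pullback $(j^1\phi)^*\omega^{(r)}$ equals $\beta$ times a homogeneous degree-$r$ polynomial in the symmetric matrix $(\phi_{\mu\nu})$; the first-order condition is exactly the statement that this polynomial vanishes for every symmetric Hessian, which annihilates the ``fully symmetrised'' part of the coefficients. The bottom operator $\bot$ lowers the $\D p$-degree by one by contracting a matched $q$–$p$ index pair, and $\bot\omega^{(r)}=0$ imposes the complementary trace-free/antisymmetric constraints. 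The main obstacle is the linear-algebra verification that, for each $r$, these two families of constraints intersect only in the zero form; equivalently, that the kernel of the pullback symbol (the forms pulling back to $0$ because the Hessian is symmetric) meets the space of effective forms trivially. I would establish this either by an explicit basis computation generalising the case $r=1$, or by induction on $r$ using that $\bot$ intertwines the $\D p$-grading with the $\mathfrak{sl}_2$/Hodge–Lepage structure recalled above; the extreme case $r=n$, where $(j^1\phi)^*(\D p_1\wedge\cdots\wedge\D p_n)=\pm\det(\phi_{\mu\nu})\,\beta$ forces the coefficient to vanish on its own, is a useful consistency check. Once all $\omega^{(r)}$ with $r\ge1$ are eliminated, only $\omega^{(0)}=L\beta$ remains; conversely Example 1.3 shows that every such $L\beta$ is effective and manifestly first-order, completing the characterisation.
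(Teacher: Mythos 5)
Your reduction is set up correctly and takes a genuinely different route from the paper: the paper's proof simply asserts that first-order dependence excludes all $\D p_\mu$ terms and then uses $\chi \mathbin{\lrcorner} \omega = 0$ to kill the $\D u$ terms, whereas you first kill $\D u$ by effectivity, then decompose $\omega = \sum_r \omega^{(r)}$ by $\D p$-degree and try to kill each $\omega^{(r)}$, $r \geq 1$, using both effectivity and the first-order hypothesis. Your degree-counting observation (pullback of $\omega^{(r)}$ is homogeneous of degree $r$ in the Hessian, and distinct degrees cannot cancel because arbitrary symmetric Hessians are realizable pointwise) and your treatment of $r = 1$ are correct, with the signs $(-1)^{n-1}$ and $(-1)^{n}$ consistent with the paper's conventions. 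In fact your route is more careful than the paper's on one point: a form containing $\D p$-terms \emph{can} pull back to a first-order expression (your own example, $c^{\mu\nu}\beta_\mu \wedge \D p_\nu$ with $c^{\mu\nu}$ antisymmetric, pulls back to zero), so the paper's opening claim needs the effectivity hypothesis to be literally true, which is exactly what your argument supplies.

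However, for $r \geq 2$ you have a program, not a proof, and you say so yourself ("the main obstacle is the linear-algebra verification\ldots"). This is a genuine gap, and not a routine one: for $r \geq 2$ the pullback-vanishing condition no longer says that the symmetric part of a coefficient matrix vanishes, but that a degree-$r$ polynomial in the minors of an arbitrary symmetric Hessian vanishes identically, while $\bot \omega^{(r)} = 0$ is a trace-type condition; proving that these two families of constraints intersect only in zero for every $r$ is precisely the uniqueness content of the Hodge--Lepage decomposition, i.e., the nontrivial theorem of Lychagin that the paper cites rather than reproves. The gap closes immediately, with no linear algebra, using facts already stated in the paper: (i) since $\bot$ lowers the $\D p$-degree by exactly one and forms of distinct $\D p$-degree are linearly independent, $\bot \omega = 0$ forces $\bot \omega^{(r)} = 0$ for each $r$ separately, so the tail $\omega' := \sum_{r \geq 1} \omega^{(r)}$ is itself effective; (ii) your homogeneity argument shows $\Delta_{\omega'} \phi = 0$ for all $\phi$, hence $\omega' \in \mathcal{I}^n$ by \eqref{graded part of ideal generated by the contact form}; (iii) every element of $\mathcal{I}^n$ has vanishing effective part (the fact used in the proof of Corollary \ref{lemma for the necessary condition theorem}), while an effective form equals its own effective part, whence $\omega' = (\omega')_\epsilon = 0$. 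Substituting this for your unfinished induction turns your proposal into a complete proof of the proposition.
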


\begin{proof}
Let $\omega \in \Omega^n(J^1 M)$ be arbitrary. If $\Delta_\omega \phi$ is assumed to depend on the first derivatives of $\phi$ at most, then $\omega$ cannot contain any $\D p_i$ term. Thus 
    \begin{equation*}
        \omega = L \beta +  L_I \D q^I\wedge \D u \ ,
    \end{equation*}
where $\beta = \D q^1 \wedge \ldots \wedge \D q^n$ and $L, L_I \in C^\infty (J^1M)$ with $I = i_1 \ldots i_{k-1}$ running through all possible combinations s.t. $ 1 \leq i_1 \leq \ldots \leq i_{k-1} \leq n$. Now recall that $\omega$ can still contain some terms resulting in zero after the pullback. Due to the Hodge-Lepage decomposition \eqref{Hodge-Lepage decomposition}, every $\omega$ has a unique effective part $\omega_\epsilon$ and the corresponding functionals satisfy
    \begin{equation*}
        \int_M \Delta_{\omega} \phi = \int_M \Delta_{\omega_\epsilon} \phi \ .
    \end{equation*}
So without loss of generality, we may assume that $\omega$ is effective. This implies two things: $\chi \mathbin{\lrcorner} \omega = 0$ and $\bot \omega = 0$. The first condition rules out the terms containing $\D u$ and we are left with $ \omega = L \beta$. It is easy to check that $\bot L \beta = 0$, meaning that $L \beta$ is effective. Thus we conclude that \eqref{first-order Lagrangian - general form} \emph{is the most general first-order Lagrangian for one scalar field $\phi$, which does not contain any terms that would vanish after the pullback on $M$.} 
\end{proof}

\subsection{Euler-Lagrange equations and the Euler operator.}

Every functional $\Phi_\omega [\phi]$ defines a variational problem $\delta \Phi_\omega [\phi] = 0$ and the corresponding E-L equation. Once we fix a functional, we may compute the E-L equation explicitly. A~natural question at this point is whether we can find $\tilde{\omega} \in \Omega^n (J^1M)$ so that the E-L equation $\delta \Phi_\omega [\phi] = 0$ is given by the Monge-Ampère equation $\Delta_{\tilde{\omega}} \phi = 0$. The answer is positive and $\tilde{\omega}$ can be determined using the \emph{Euler operator} $\mathcal{E}$. 

\begin{definition}
Euler operator $\mathcal{E} \colon \Omega^n (J^1 M) \to \Omega^n (J^1 M)$, $n = \dim M$ is defined by 
    \begin{equation}\label{Euler operator}
        \mathcal{E} : = \D_p  \bot  \D_p + \mathcal{L}_{\chi} \ , 
    \end{equation}
where $\D_p$ is defined by \eqref{projected exterior derivative - definition}, $\bot$ is defined by \eqref{bottom operator}, and $\mathcal{L}_{\chi}$ is the Lie derivative along the Reeb field given by \eqref{Reeb field conditions}.
\end{definition}

The key motivation for us to work with the Euler operator is the following equivalence
    \begin{equation}\label{Euler operator and vartiational problems}
        \delta  \Phi_\omega [\phi] = 0 \iff \Delta_{\mathcal{E}(\omega)} \phi = 0 \ .
    \end{equation}
In other words, the variational problem given by functional of $\omega$ is described by $\mathcal{E}(\omega)$. The proof of this statement and many other useful properties, as well as the details about the cohomological origin of the defining equation \eqref{Euler operator} can be found in \cite{Lychagin, kushner_lychagin_rubtsov_2006}. 

We have the following lemma, which will be used to formulate the necessary conditions for the existence of a first-order Lagrangian of a given PDE (i.e. necessary conditions for the existence of a solution to a given local inverse variational problem).

\begin{lemma}\label{lemma about the Euler operator end E-L equations}
Let $L \beta \in \Omega^n(J^1M)$ be a first-order Lagrangian, $\mathcal{E}$ be defined by \eqref{Euler operator}. Then 
    \begin{enumerate}
        \item $\mathcal{E} (L\beta)$ is effective. 
        \item $\Delta_{\mathcal{E}(L\beta)} \phi = 0$ is the E-L equation of $\Phi_{L\beta} [\phi]$. 
    \end{enumerate}
\end{lemma}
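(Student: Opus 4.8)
The plan is to prove both assertions at once by computing $\mathcal{E}(L\beta)$ explicitly in the Darboux coordinates $(q^\mu, u, p_\mu)$ in which $\mathfrak{c} = \D u - p_\mu \D q^\mu$, $\chi = \partial_u$, and $\Omega = \D q^\mu \wedge \D p_\mu$. Writing $\beta_\mu := \partial_{q^\mu}\mathbin{\lrcorner}\beta$ and $\beta_{\mu\nu} := \partial_{q^\nu}\mathbin{\lrcorner}\partial_{q^\mu}\mathbin{\lrcorner}\beta$ as in Example \ref{effective forms - example}, I would first evaluate the inner layers of \eqref{Euler operator}. Since $\D q^\mu \wedge \beta = 0$, only the $\D u$ and $\D p_\mu$ components of $\D L$ survive in $\D(L\beta)$, and the projection \eqref{projection operator} removes the $\D u$ part, giving $\D_p(L\beta) = L_{p_\mu}\,\D p_\mu \wedge \beta$. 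Contracting with $X_\Omega$ then yields $\bot\D_p(L\beta) = -L_{p_\mu}\beta_\mu$, using $\bot(\D p_\mu \wedge \beta) = -\beta_\mu$. Applying $\D_p$ a second time and adding $\mathcal{L}_\chi(L\beta) = L_u\beta$, I expect to arrive at the closed form
\begin{equation*}
\mathcal{E}(L\beta) = \bigl(L_u - L_{p_\mu q^\mu} - p_\mu L_{p_\mu u}\bigr)\beta - L_{p_\mu p_\nu}\,\D p_\nu \wedge \beta_\mu ,
\end{equation*}
with the usual summation convention.

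From this expression both parts follow quickly. For assertion (1), effectiveness means $\chi\mathbin{\lrcorner}\mathcal{E}(L\beta)=0$ and $\bot\mathcal{E}(L\beta)=0$ (Definition \ref{effective forms - definiton via bottom operator}). The first condition is immediate because no $\D u$ appears; more conceptually, the outer $\D_p$ lands in $\Omega^n(\mathcal{C})$ by construction of $p$, while for the $\mathcal{L}_\chi$ term one uses that $\chi\mathbin{\lrcorner}$ commutes with $\mathcal{L}_\chi$ together with $\chi\mathbin{\lrcorner}(L\beta)=0$. For $\bot\mathcal{E}(L\beta)=0$, the $\beta$-term is annihilated by $\bot$ (it carries no $\D p$), and $\bot(\D p_\nu\wedge\beta_\mu) = -\beta_{\mu\nu}$, so that $\bot\mathcal{E}(L\beta) = L_{p_\mu p_\nu}\beta_{\mu\nu}$. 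This vanishes because $L_{p_\mu p_\nu}$ is symmetric in $\mu,\nu$ while $\beta_{\mu\nu}$ is antisymmetric; this symmetric/antisymmetric cancellation is the structural reason the Euler operator preserves effectivity here.

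For assertion (2), I would pull the explicit form back along $j^1\phi$. Using $(j^1\phi)^*p_\mu=\phi_\mu$, $(j^1\phi)^*\D p_\nu = \phi_{\nu\rho}\,\D q^\rho$, and $\D q^\rho\wedge\beta_\mu=\delta^\rho_\mu\beta$, one gets $(j^1\phi)^*(\D p_\nu\wedge\beta_\mu)=\phi_{\nu\mu}\beta$, hence
\begin{equation*}
\Delta_{\mathcal{E}(L\beta)}\phi = \bigl(L_u - L_{p_\mu q^\mu} - \phi_\mu L_{p_\mu u} - L_{p_\mu p_\nu}\phi_{\nu\mu}\bigr)\beta ,
\end{equation*}
with all $L$-derivatives evaluated at $j^1\phi$. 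The bracket is exactly $\frac{\partial L}{\partial\phi} - \frac{d}{dq^\mu}\frac{\partial L}{\partial\phi_\mu}$ once the total derivative is expanded, i.e. the classical Euler-Lagrange expression for $\Phi_{L\beta}[\phi]=\int_M L\,\beta$; so $\Delta_{\mathcal{E}(L\beta)}\phi=0$ is its E-L equation. Alternatively, (2) is just the specialization of the general equivalence \eqref{Euler operator and vartiational problems} to $\omega = L\beta$, but carrying out the pullback has the advantage of simultaneously confirming the coordinate formula used in (1).

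The main obstacle is the second application of $\D_p$ in computing $\D_p\bot\D_p(L\beta)$: here one must track the correction term $-\mathfrak{c}\wedge(\chi\mathbin{\lrcorner}\,\cdot\,)$ of the projection and check that the $\D u\wedge\beta_\mu$ contributions cancel, leaving precisely the $\beta$- and $\D p_\nu\wedge\beta_\mu$-terms above. Everything else is bookkeeping with $\beta_\mu$, $\beta_{\mu\nu}$ and the identities $\D q^\nu\wedge\beta_\mu=\delta^\nu_\mu\beta$ and $\D\beta_\mu=0$.
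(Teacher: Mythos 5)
Your proposal is correct and follows essentially the same route as the paper: compute $\mathcal{E}(L\beta)$ explicitly in Darboux coordinates (your closed form agrees with the paper's \eqref{E (L beta) in local coordinates}, up to reordering $\D p_\nu \wedge \beta_\mu = (-1)^{n-1}\beta_\mu \wedge \D p_\nu$), verify effectiveness by the absence of $\D u$ together with the symmetric/antisymmetric cancellation $L_{p_\mu p_\nu}\beta_{\mu\nu} = 0$, and identify $\Delta_{\mathcal{E}(L\beta)}\phi = 0$ with the classical Euler--Lagrange equation. The only cosmetic difference is that you verify part (2) by pulling back along $j^1\phi$ directly, while the paper invokes the general equivalence \eqref{Euler operator and vartiational problems} before displaying the same coordinate identity; both are sound.
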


\begin{proof}
Assume the local coordinates satisfying \eqref{contact 1-form in coordinates} and observe that $\chi \mathbin{\lrcorner} L\beta = L \chi \mathbin{\lrcorner} \beta = 0$. Direct computation gives
    \begin{equation*}
         \D_p \bot \D_p (L\beta) =  \frac{\partial^2 L}{\partial p_{\mu} \partial p_{\nu}} \beta_{\mu} \wedge \D p_{\nu}  - ( \frac{\partial^2 L}{\partial q^{\mu} \partial p_{\mu}} +  p_{\mu} \frac{\partial^2 L}{\partial u \partial p_{\mu}} ) \beta \ ,
    \end{equation*}
where $\beta_\mu : = \partial_{q^\mu} \mathbin{\lrcorner} \beta = \partial_{q^\mu} \mathbin{\lrcorner} (\D q^1 \wedge \ldots \wedge \D q^n)$. Using the Cartan formula $\mathcal{L} = \mathbin{\lrcorner} d + d \mathbin{\lrcorner}$, we further obtain
    \begin{equation*}
        \mathcal{L}_{\chi} (L\beta) =  \frac{\partial L}{\partial u} \beta + L( \chi \mathbin{\lrcorner} \D \beta + \D \chi \mathbin{\lrcorner} \beta) =  \frac{\partial L}{\partial u} \beta \ .
    \end{equation*}
Thus, following the definition \eqref{Euler operator}, the coordinate expression of $\mathcal{E}(L\beta)$ is
    \begin{equation}\label{E (L beta) in local coordinates}
       \mathcal{E} (L \beta) = \frac{\partial^2 L}{\partial p_\mu \partial p_\nu} \beta_\mu \wedge \D p_\nu - ( \frac{\partial^2 L}{\partial q^\mu \partial p_\mu } +  p_{\mu} \frac{\partial^2 L}{\partial u \partial p_\mu} - \frac{\partial L}{\partial u}) \beta \ .
    \end{equation}
Let us denote $ B_{\mu \nu }: = \frac{\partial^2 L}{\partial p_\mu \partial p_\nu}$ and $\beta_{\mu \nu}: = (\partial_{q^\mu} \wedge \partial_{q^\nu}) \mathbin{\lrcorner} \beta $. Hence $B_{\nu \mu} = B_{\mu \nu}$, and $\beta_{\nu \mu} = - \beta_{\mu \nu}$. We will check that $\mathcal{E} (L \beta)$ is effective (see def. \ref{effective forms - working definition}). Firstly recall that $\chi = \partial_u$ and that \eqref{E (L beta) in local coordinates} does not contain $\D u$, so $\chi \mathbin{\lrcorner} \mathcal{E} (L \beta) = 0$. Secondly, since $\bot \beta = 0$,     
    \begin{equation*}
         \bot \mathcal{E}(L\beta) = B_{\mu \nu} ( \partial_{q^\alpha} \wedge \partial_{p_\alpha} ) \mathbin{\lrcorner} (\D p_{\nu} \wedge \beta_{\mu}) = - B_{\mu \nu} \partial_{q^\nu} \mathbin{\lrcorner} \beta_{\mu}  = 
            \begin{cases} 
              - B_{\mu \nu} \beta_{\mu \nu} & \mu = \nu \\
              0 & \mu \neq \nu 
            \end{cases} \ . 
    \end{equation*}
Writing the sums over $\mu, \nu$ explicitly, the term $ B_{\mu \nu} \beta_{\mu \nu}$ reads as
    \begin{equation*}
        B_{\mu \nu} \beta_{\mu \nu} = \sum_{\mu < \nu} (B_{\mu \nu} \beta_{\mu \nu} + B_{\nu \mu} \beta_{\nu \mu} ) =  \sum_{\mu < \nu} B_{\mu \nu} (\beta_{\mu \nu} - \beta_{\mu \nu}) = 0 \ ,  
    \end{equation*}
which implies $\bot \mathcal{E}(L\beta) = 0$.  

To show the latter statement, we firstly notice that $\mathcal{E}$ is a $0$ degree operator, which follows directly from $\deg \D_p = 1, \deg \bot = -2, \deg \mathcal{L} = 0$. Hence starting with $L\beta \in \Omega^n(J^1M)$, the result $\mathcal{E}(L\beta)$ is also a $n$-form and $\Delta_{\mathcal{E}(\omega)} \phi = 0$ is a well-defined equation on $M$. The property \eqref{Euler operator and vartiational problems} is then expressed for $\omega = L\beta$ as follows
    \begin{equation*}
       \delta \Phi_{L\beta} [\phi] = \delta \int_M (j^1\phi)^* L \beta = 0 \iff \Delta_{\mathcal{E}(L\beta)} \phi = 0 \ .
    \end{equation*}
Using the coordinate description of $\mathcal{E}(L\beta)$ given by \eqref{E (L beta) in local coordinates}, we get 
    \begin{equation*}
        \Delta_{\mathcal{E}(L\beta)} \phi = 0 \iff \frac{\partial (j^1\phi)^*L}{\partial \phi} - \frac{\partial}{\partial q^\mu } \frac{\partial (j^1\phi)^*L}{\partial \phi_\mu } = 0 \ ,
    \end{equation*}
which is the standard form of the E-L equation for a first-order Lagrangian function $(j^1\phi)^*L = L(q^\mu, \phi, \phi_\mu)$ on $M$, corresponding to $\Phi_{L\beta} [\phi] = \int_M \Delta_{L\beta} \phi$. 
\end{proof}


\section{Effective forms and the inverse variational problem}

In this section, we will see how M-A equations can be described by effective forms, which provide a unique (up to a scalar multiple) representation of the equation by a~differential form on the first jet space\footnote{The equation can be reconstructed from the differential form via the M-A operator \eqref{Monge-Ampere operator}.}. This enables us to show that both Plebański heavenly, Husain and Grant equations do not have a first-order Lagrangian which would solve the corresponding (local) inverse variational problem. 

The first and easy step is to find a simple representation of the equation (see def. \ref{simple representation of M-A equation}). The simple representation might not be effective. Indeed, this is the case in all the aforementioned equations. The Hodge-Lepage decomposition \eqref{Hodge-Lepage decomposition} assures that we can always find the effective part of a given form, although it does not give a recipe for doing so. Thus we introduce lemma \ref{lemma - computation of effective part} which provides an efficient algorithmic way to determine the effective form of a M-A equation in the case $\dim M = 4$. The following lemma is an intermediate step. 

\begin{lemma}
Let $\omega \in \Omega^2 (\mathcal{C})$ be arbitrary and $\Omega = d \mathfrak{c}$ be the symplectic form on the contact structure $\mathcal{C} \subset T (J^1M)$. The following holds
    \begin{equation}\label{perp applied to a wedge}
        \bot (\omega \wedge \Omega ) = (\bot \omega) \Omega + (n-2) \omega \ ,
    \end{equation}
where $n = \dim M$.   
\end{lemma}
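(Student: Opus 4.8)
The plan is to establish the operator identity $\bot(\,\cdot\wedge\Omega) = (\bot\,\cdot\,)\,\Omega + (n-2)(\,\cdot\,)$ on $\Omega^2(\mathcal{C})$ by unpacking the bottom operator as an iterated interior product and repeatedly applying the anti-derivation property of $\mathbin{\lrcorner}$. Since $\bot$ (contraction with the fixed bivector $X_\Omega = \partial_{q^\mu}\wedge\partial_{p_\mu}$) and wedging with $\Omega$ are both $C^\infty(J^1M)$-linear, and the asserted identity is $\mathbb{R}$-linear in $\omega$, no generality is lost by working with an arbitrary $\omega\in\Omega^2(\mathcal{C})$ without fixing a basis. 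Following the paper's convention $\bot\alpha = \partial_{p_\mu}\mathbin{\lrcorner}\,\partial_{q^\mu}\mathbin{\lrcorner}\,\alpha$ (which normalizes $\bot\Omega = n$), I would first record the elementary contractions $\partial_{q^\mu}\mathbin{\lrcorner}\,\Omega = \D p_\mu$, $\partial_{p_\mu}\mathbin{\lrcorner}\,\Omega = -\D q^\mu$, and $\partial_{p_\mu}\mathbin{\lrcorner}\,\D p_\mu = 1$ (fixed $\mu$, no sum).

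Next I would expand $\partial_{q^\mu}\mathbin{\lrcorner}(\omega\wedge\Omega)$ and then $\partial_{p_\mu}\mathbin{\lrcorner}(\cdots)$ via $v\mathbin{\lrcorner}(\alpha\wedge\beta) = (v\mathbin{\lrcorner}\alpha)\wedge\beta + (-1)^{|\alpha|}\alpha\wedge(v\mathbin{\lrcorner}\beta)$, using that $\omega$ has even degree so the parity signs stay under control. After summing over $\mu$ and collecting terms, the result splits into three pieces: a term $(\bot\omega)\,\Omega$ from both contractions hitting $\omega$; a term $n\omega$ from the bivector being fully contracted against $\Omega$ (this is where the factor $n=\dim M$ enters, via $\partial_{p_\mu}\mathbin{\lrcorner}\,\D p_\mu$ summed over $\mu$); and a mixed term
\begin{equation*}
S := \sum_\mu (\partial_{q^\mu}\mathbin{\lrcorner}\,\omega)\wedge \D q^\mu + \sum_\mu (\partial_{p_\mu}\mathbin{\lrcorner}\,\omega)\wedge \D p_\mu ,
\end{equation*}
in which one contraction hits $\omega$ and the other hits $\Omega$.

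The crux is evaluating $S$. For this I would invoke the degree (Euler) identity $\sum_a \theta^a \wedge (X_a\mathbin{\lrcorner}\,\alpha) = (\deg\alpha)\,\alpha$ for a frame $\{X_a\}$ dual to a coframe $\{\theta^a\}$, applied to the full coframe $\{\D q^\mu,\D u,\D p_\mu\}$ on $J^1M$. Because $\omega\in\Omega^2(\mathcal{C})$ means $\chi\mathbin{\lrcorner}\,\omega = \partial_u\mathbin{\lrcorner}\,\omega = 0$, the $\D u$ contribution drops out, giving $\sum_\mu \D q^\mu\wedge(\partial_{q^\mu}\mathbin{\lrcorner}\,\omega) + \sum_\mu \D p_\mu\wedge(\partial_{p_\mu}\mathbin{\lrcorner}\,\omega) = 2\omega$. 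Reordering the one-forms $\partial_{q^\mu}\mathbin{\lrcorner}\,\omega$ and $\partial_{p_\mu}\mathbin{\lrcorner}\,\omega$ past $\D q^\mu,\D p_\mu$ (each swap costing a sign) turns this into $S = -2\omega$. Substituting back yields $\bot(\omega\wedge\Omega) = (\bot\omega)\,\Omega - 2\omega + n\omega = (\bot\omega)\,\Omega + (n-2)\omega$. As a sanity check I would test $\omega = \Omega$, where the formula predicts $\bot(\Omega\wedge\Omega) = (2n-2)\Omega$, consistent with $\bot\Omega = n$.

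The main obstacle is purely the sign and summation bookkeeping: one must stay consistent about the ordering in $\bot = \partial_{p_\mu}\mathbin{\lrcorner}\,\partial_{q^\mu}\mathbin{\lrcorner}$, about the graded Leibniz signs at each of the two contractions, and about the parity signs when rewriting $S$. Conceptually this identity is just the symplectic Lefschetz relation $[\bot,\,\cdot\wedge\Omega] = (n-k)\,\mathrm{Id}$ on $k$-forms specialized to $k=2$ — the same $\mathfrak{sl}_2(\mathbb{R})$ structure underlying the Hodge–Lepage decomposition cited earlier — so an alternative proof would simply appeal to that representation-theoretic statement.
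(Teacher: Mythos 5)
Your proposal is correct, and its key step is genuinely different from the paper's. Both arguments start from the same graded-Leibniz expansion of $\bot(\omega \wedge \Omega) = \partial_{p_\mu} \mathbin{\lrcorner} \partial_{q^\mu} \mathbin{\lrcorner} (\omega \wedge \Omega)$, which produces the three pieces $(\bot \omega)\Omega$, the term $n\omega$, and the mixed term $S$; the divergence is in how $S = -2\omega$ is established. The paper expands $\omega$ in the basis $\D q^\mu \wedge \D q^\nu$, $\D q^\mu \wedge \D p_\nu$, $\D p_\mu \wedge \D p_\nu$ of $\Omega^2(\mathcal{C})$, reduces WLOG to constant coefficients by $C^\infty$-linearity, and contracts term by term against the elementary relations; the hypothesis $\chi \mathbin{\lrcorner} \omega = 0$ enters only implicitly, through the choice of a basis without $\D u$ terms. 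You instead invoke the coordinate-free degree (Euler) identity $\sum_a \theta^a \wedge (X_a \mathbin{\lrcorner} \alpha) = (\deg \alpha)\,\alpha$ on the full coframe $\{\D q^\mu, \D u, \D p_\mu\}$ of $J^1M$, use $\chi \mathbin{\lrcorner} \omega = 0$ explicitly to kill the $\D u$ contribution, and then reorder $1$-forms to get $S = -2\omega$; I have checked the signs and they work out exactly as you claim. Your route buys two things: the role of the hypothesis $\omega \in \Omega^2(\mathcal{C})$ becomes transparent rather than hidden in a basis choice, and the argument generalizes verbatim to $k$-forms, giving $\bot(\omega \wedge \Omega) = (\bot \omega)\Omega + (n-k)\omega$ on $\Omega^k(\mathcal{C})$ — precisely the general formula that the paper only cites from Lychagin. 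What the paper's computation buys is elementary self-containedness (nothing beyond the listed contraction relations is needed), although its write-up of the basis reduction is looser, conflating summed and unsummed indices in the displayed intermediate steps. Your sanity check on $\omega = \Omega$ and your closing identification of the formula as the $\mathfrak{sl}_2(\mathbb{R})$ Lefschetz commutator underlying the Hodge--Lepage decomposition are both consistent with the paper's framework.
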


\begin{proof}
Recall that, in the local coordinates s.t. \eqref{contact 1-form in coordinates} holds, we have $\Omega = \D q^\mu \wedge \D p_\mu$ and $\bot \omega = (\partial_{q^\mu} \wedge \partial_{p_\mu}) \mathbin{\lrcorner} \omega = \partial_{p_\mu} \mathbin{\lrcorner} \partial_{q^\mu} \mathbin{\lrcorner} \omega$, which implies $\bot \Omega = n$. Hence
    \begin{equation}\label{intermediate computation 3.1}
        \bot (\omega \wedge \Omega ) = (\bot \omega) \Omega - \partial_{q^\mu} \mathbin{\lrcorner} \omega \wedge \partial_{p_\mu} \mathbin{\lrcorner} \Omega + \partial_{p_\mu} \mathbin{\lrcorner} \omega \wedge \partial_{q^\mu} \mathbin{\lrcorner} \Omega + n \omega \ . 
    \end{equation}
We will show that the middle two terms add up to $-2\omega$. Note that the basis of $\Omega^2 (\mathcal{C})$ consists of pairs $\D q^\mu \wedge \D q^\nu, \D q^\mu \wedge \D p_\nu, \D p_\mu \wedge \D p_\nu $. Because $\partial_q, \partial_p$ are duals to $\D q, \D p$, the basis of $\Omega^2 (\mathcal{C})$ satisfies
    \begin{align*}
        \partial_{q^\mu} \mathbin{\lrcorner} (\D q^\nu \wedge \D q^\xi) & = \delta_{\mu \nu} \D q^\xi - \delta_{\mu \xi} \D q^\nu \ , &  \partial_{p_\mu} \mathbin{\lrcorner} (\D q^\nu \wedge \D q^\xi) & = 0  \ , \\ 
        \partial_{q^\mu} \mathbin{\lrcorner} (\D q^\nu \wedge \D p_\xi) & = \delta_{\mu \nu} \D p_\xi \ , & \partial_{p_\mu} \mathbin{\lrcorner} (\D q^\nu \wedge \D p_\xi) & = - \delta_{\mu \xi} \D q^\nu \ ,  \\
        \partial_{q^\mu} \mathbin{\lrcorner} (\D p_\nu \wedge \D p_\xi) & = 0 \ , & \partial_{p_\mu} \mathbin{\lrcorner} (\D p_\nu \wedge \D p_\xi) & = \delta_{\mu \nu} \D p_\xi - \delta_{\mu \xi} \D p_\nu \ .
    \end{align*}
Since every $\omega \in \Omega^2 (\mathcal{C})$ is of the form $ \omega = \omega_{IJ} \D q^I \wedge \D p_J$ for some functions $ \omega_{IJ} \in C^\infty (J^1M)$, where $I,J$ are ascending multiindices of appropriate length. Due to $C^\infty$-linearity of $\mathbin{\lrcorner}$, we can, without loss of generality, assume that all $\omega_{IJ}$ are constant functions, say $\omega_{IJ} = 1$, and write
    \begin{equation*}
        \omega = \sum_{\nu < \xi } \D q^\nu \wedge \D q^\xi + \sum_{\nu, \xi} \D q^\nu \wedge \D p_\xi + \sum_{\nu < \xi} \D p_\nu \wedge \D p_\xi \ . 
    \end{equation*}
Using the above relations we obtain 
    \begin{equation*}
        \partial_{q^\mu} \mathbin{\lrcorner} \omega \wedge \partial_{p_\mu} \mathbin{\lrcorner} \Omega =  (\delta_{\mu \nu} \D q^\xi - \delta_{\mu \xi} \D q^\nu + \delta_{\mu \nu} \D p_\xi ) \wedge (- \D q^\mu) = 2 \D q^\nu \wedge \D q^\xi + \D q^\nu \wedge \D p_\xi \ , 
    \end{equation*}
and similarly 
    \begin{equation*}
        \partial_{p_\mu} \mathbin{\lrcorner} \omega \wedge \partial_{q^\mu} \mathbin{\lrcorner} \Omega = (- \delta_{\mu \xi} \D q^\nu + \delta_{\mu \nu} \D p_\xi - \delta_{\mu \xi} \D p_\nu ) \D p_\mu = - \D q^\nu \wedge \D p_\xi - 2 \D p_\nu \wedge \D p_\xi \ .
    \end{equation*}
Combining the last two results to fit the terms in \eqref{intermediate computation 3.1} yields
    \begin{equation*}
        - \partial_{q^\mu} \mathbin{\lrcorner} \omega \wedge \partial_{p_\mu} \mathbin{\lrcorner} \Omega + \partial_{p_\mu} \mathbin{\lrcorner} \omega \wedge \partial_{q^\mu} \mathbin{\lrcorner} \Omega = - 2 \omega \ ,
    \end{equation*}
which proves the formula \eqref{perp applied to a wedge}.
\end{proof}

We use the previous lemma to prove the following. A general formula and its proof can be found in \cite{Lychagin}.  
\begin{lemma}\label{lemma - computation of effective part}
Let $\omega \in \Omega^4 (\mathcal{C})$ be arbitrary, $ n = \dim M > 2$. The effective part $\omega_{\epsilon}$ is given by
    \begin{equation}\label{effective part of a 4-form on the Cartan distribution}
         \omega_{\epsilon} = \omega - \frac{1}{n-2} \bot \omega \wedge \Omega + \frac{\bot^2 \omega}{2(n-1)(n-2)} \Omega \wedge \Omega \ . 
    \end{equation}
\end{lemma}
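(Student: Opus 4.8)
The plan is to exploit the uniqueness of the effective part guaranteed by the Hodge--Lepage decomposition \eqref{Hodge-Lepage decomposition}, rather than to construct $\omega_\epsilon$ from scratch. Concretely, I would set
    \begin{equation*}
        \eta := \omega - \frac{1}{n-2}\bot\omega \wedge \Omega + \frac{\bot^2\omega}{2(n-1)(n-2)}\,\Omega\wedge\Omega
    \end{equation*}
and prove two things: that $\eta$ is effective (i.e. $\eta\in\Omega^4(\mathcal{C})$ and $\bot\eta=0$), and that $\omega-\eta$ lies in the image of the map $x\mapsto x\wedge\Omega$. Granting these, the decomposition $\omega=\eta+(\omega-\eta)$ is exactly of the form \eqref{Hodge-Lepage decomposition}, so uniqueness forces $\eta=\omega_\epsilon$ and the formula follows.

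Before the main computation I would assemble three ingredients. First, the identity \eqref{perp applied to a wedge} holds for \emph{any} $\alpha\in\Omega^2(\mathcal{C})$, so in particular $\bot(\bot\omega\wedge\Omega)=(\bot^2\omega)\,\Omega+(n-2)\bot\omega$, where $\bot\omega$ is a general (not necessarily effective) $2$-form. Second, applying \eqref{perp applied to a wedge} to $\alpha=\Omega$ together with $\bot\Omega=n$ gives $\bot(\Omega\wedge\Omega)=(\bot\Omega)\Omega+(n-2)\Omega=2(n-1)\Omega$. Third, $\bot^2\omega$ is a $0$-form (a function), hence, by the $C^\infty$-linearity of the interior product, it passes freely through $\bot$; and since $X_\Omega=\partial_{q^\mu}\wedge\partial_{p_\mu}$ involves no $\partial_u$, the operator $\bot$ preserves $\Omega(\mathcal{C})$, which already yields $\chi\mathbin{\lrcorner}\eta=0$ because each summand of $\eta$ lies in $\Omega(\mathcal{C})$.

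With these in hand the verification of $\bot\eta=0$ is a short telescoping calculation:
    \begin{equation*}
        \bot\eta = \bot\omega - \frac{1}{n-2}\bigl((\bot^2\omega)\Omega+(n-2)\bot\omega\bigr) + \frac{\bot^2\omega}{2(n-1)(n-2)}\,2(n-1)\Omega = 0 \ ,
    \end{equation*}
the two $\Omega$-terms cancelling against each other and the $\bot\omega$-terms cancelling. The congruence is immediate, since
    \begin{equation*}
        \omega-\eta = \Bigl(\tfrac{1}{n-2}\bot\omega - \tfrac{\bot^2\omega}{2(n-1)(n-2)}\Omega\Bigr)\wedge\Omega
    \end{equation*}
is visibly $x\wedge\Omega$ with $x\in\Omega^2(\mathcal{C})$.

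The only genuinely delicate point is the second ingredient above, namely that the $2$-form identity \eqref{perp applied to a wedge} is applied to $\alpha=\bot\omega$, which need \emph{not} be effective; one must use the full identity with its $(\bot\alpha)\Omega$ term present, rather than the simplified primitive version. An alternative, more constructive route would be to iterate \eqref{Hodge-Lepage decomposition}: write $\omega=\omega_\epsilon+x\wedge\Omega$ and then decompose the $2$-form $x=x_\epsilon+f\Omega$ with $f\in C^\infty(J^1M)$, so that $\omega=\omega_\epsilon+x_\epsilon\wedge\Omega+f\,\Omega\wedge\Omega$; applying $\bot$ and $\bot^2$ and using $\bot x_\epsilon=0$ would then solve the resulting triangular linear system for $x_\epsilon$ and $f$, reproducing the same coefficients. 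I would keep the uniqueness-based verification as the primary argument, since it avoids invoking the two-step primitive (Lefschetz) decomposition explicitly.
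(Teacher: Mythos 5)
Your proposal is correct and uses exactly the same ingredients as the paper's proof --- the Hodge--Lepage decomposition \eqref{Hodge-Lepage decomposition} and the identity \eqref{perp applied to a wedge}, the latter applied to the not-necessarily-effective $2$-forms $\bot\omega$ and $\Omega$ so that $\bot(\bot\omega\wedge\Omega)=(\bot^2\omega)\Omega+(n-2)\bot\omega$ and $\bot(\Omega\wedge\Omega)=2(n-1)\Omega$. The only difference is one of direction: the paper applies $\bot$ twice to $\omega=\omega_\epsilon+x\wedge\Omega$ and solves the resulting triangular system for $x$ (essentially the alternative constructive route you sketch at the end), whereas you verify that the stated formula is effective and differs from $\omega$ by a term of the form $x\wedge\Omega$, then invoke uniqueness of $\omega_\epsilon$; the underlying algebra is identical.
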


\begin{proof}
Consider the Hodge-Lepage decomposition
    \begin{equation*}
        \omega = \omega_\epsilon + x \wedge \Omega \ ,
    \end{equation*}
where $\omega_\epsilon \in \Omega^k(\mathcal{C})$ is the unique effective part of $\omega$ and $x \in \Omega^{k-2}(\mathcal{C})$ is not necessarily effective. Applying $\bot$ twice on the above equation together with the formula \eqref{perp applied to a wedge} gives the following system
    \begin{eqnarray*}
        \bot \omega & = (\bot x) \Omega + (n-2) x \ , \\ 
        \bot^2 \omega & = 2(n-1) \bot x \ ,
    \end{eqnarray*}
which can be solved for $x$ 
    \begin{equation*}
        x = \frac{1}{(n-2)}\bot \omega - \frac{\bot^2 \omega}{2(n-1)(n-2)} \Omega \ .
    \end{equation*}
Substituting this into the Hodge-Lepage decomposition yields the formula for the effective part of a $4$-form $\omega$.
\end{proof}


A differential $k$-form is called \emph{simple} if it contains only one summand, when expressed in the canonical coordiantes \eqref{contact 1-form in coordinates}. For example, let $k=2$. Then $\D q^1 \wedge \D q^2$ is simple while $\D q^1 \wedge \D q^2 + \D q^3 \wedge \D q^4$ is not simple. 

\begin{definition}\label{simple representation of M-A equation}
Consider a M-A equation $\Delta_\omega \phi = 0$. Then $\omega$ is called a simple representation of the equation, if it has constant coefficients and contains the minimal number of simple terms. 
\end{definition}

\begin{remark}
Note that the property of being simple is basis dependent. On the other hand, the effectivity is a basis independent notion.
\end{remark}

It seems natural to denote Lagrangian functions and their corresponding counterpart defined on $J^1M$ by the same symbol, i.e. to write $L = L(q^\mu, u, p_\mu)$ as well as $(j^1 \phi)^* L = L (q^\mu, \phi, \phi_\mu)$. To avoid any confusion, we distinguish the two in the following proposition as follows. A Lagrangian function that can be integrated over $M$ will be $L$, its $J^1M$ counterpart will be $\tilde{L}$.

\begin{proposition}\label{necessary condition for the existence of a solution to the inverse variational problem}
Let $\Delta_\omega \phi = 0$ be a M-A equation over an open subset of a smooth manifold $M$, $\dim M = n$. Then a necessary condition for a first-order Lagrangian function $L = L(q^\mu, \phi, \phi_\mu)$ to be a local solution of the inverse variational problem corresponding to $\Delta_\omega \phi = 0$ is 
    \begin{equation}\label{first-order Lagrangians - necessary condition}
        k \omega_\epsilon = \mathcal{E}(\tilde{L} \beta) \ ,
    \end{equation}
for some non-vanishing function $k \colon J^1M \to \mathbb{R}$, where $\omega_\epsilon$ is the effective part of $\omega$, $\mathcal{E}$ is the Euler operator given by \eqref{Euler operator}, $\tilde{L} \colon J^1M \to \mathbb{R}$ is such that $\tilde{L} \circ j^1 \phi = L (q^\mu, \phi, \phi_\mu) $, and $\beta = \D q^1 \wedge \ldots \D q^n$.  
\end{proposition}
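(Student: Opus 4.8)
The plan is to read the claim off almost directly from Lemma \ref{lemma about the Euler operator end E-L equations} and Corollary \ref{lemma for the necessary condition theorem}, once the notion of a solution of the inverse variational problem is unfolded. By definition, the first-order Lagrangian function $L$ is a local solution of the inverse variational problem for $\Delta_\omega\phi = 0$ precisely when the E-L equation of the functional $\Phi_{\tilde L\beta}[\phi]$ coincides, as a differential equation on $M$, with the M-A equation $\Delta_\omega\phi = 0$. So the first step is to translate this coincidence into a statement about the two representing $n$-forms $\omega$ and $\mathcal{E}(\tilde L\beta)$.

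First I would invoke Lemma \ref{lemma about the Euler operator end E-L equations}: for the first-order Lagrangian $\tilde L\beta$ the E-L equation of $\Phi_{\tilde L\beta}[\phi]$ is exactly $\Delta_{\mathcal{E}(\tilde L\beta)}\phi = 0$, and moreover $\mathcal{E}(\tilde L\beta)$ is effective, i.e. it lies in $\Omega^n(\mathcal{C})$ and satisfies $\bot\,\mathcal{E}(\tilde L\beta) = 0$. Hence the assumption that $L$ solves the inverse variational problem says precisely that $\omega$ and $\mathcal{E}(\tilde L\beta)$ determine one and the same M-A equation.

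Next I would reduce to the situation of Corollary \ref{lemma for the necessary condition theorem}, which compares effective parts of two forms in $\Omega^n(\mathcal{C})$ that define the same equation. If $\omega$ is not already in $\Omega^n(\mathcal{C})$, I replace it by its projection $p(\omega) = \omega - \mathfrak{c}\wedge(\chi\mathbin{\lrcorner}\omega)$; the difference $\omega - p(\omega)$ is a multiple of $\mathfrak{c}$, hence lies in $\mathcal{I}^n$ and yields the trivial equation by \eqref{graded part of ideal generated by the contact form}, so $p(\omega)$ determines the same equation and has the same effective part $\omega_\epsilon$. Applying the Corollary to $p(\omega)$ and $\mathcal{E}(\tilde L\beta)$ gives a non-vanishing function $c \in C^\infty(J^1M)$ with $\omega_\epsilon = c\,\bigl(\mathcal{E}(\tilde L\beta)\bigr)_\epsilon$. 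Since $\mathcal{E}(\tilde L\beta)$ is already effective, its effective part is itself, so $\omega_\epsilon = c\,\mathcal{E}(\tilde L\beta)$; rearranging with $k := c^{-1}$ produces exactly \eqref{first-order Lagrangians - necessary condition}.

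The computation itself is short because the genuine content has been packaged into the two cited results. The step I expect to require the most care is the bookkeeping that justifies applying Corollary \ref{lemma for the necessary condition theorem}: one must ensure both arguments lie in $\Omega^n(\mathcal{C})$ (handled by the projection $p$ and the observation that it changes neither the equation nor the effective part) and that \emph{determining the same equation}, in the sense used by the Corollary, is exactly what \emph{being a solution of the inverse variational problem} means here. Everything else is a direct substitution and a relabelling of the non-vanishing proportionality factor.
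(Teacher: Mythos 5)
Your proposal is correct and follows essentially the same route as the paper's proof: both statements of Lemma \ref{lemma about the Euler operator end E-L equations} identify the E-L equation of $\Phi_{\tilde L\beta}[\phi]$ with $\Delta_{\mathcal{E}(\tilde L\beta)}\phi=0$ and establish effectivity, after which Corollary \ref{lemma for the necessary condition theorem} gives the proportionality of effective forms. Your extra bookkeeping step of replacing $\omega$ by $p(\omega)$ via $\mathcal{I}^n$ is a slightly more explicit version of the paper's remark that $\omega$ and $\omega_\epsilon$ determine the same equation, so the two arguments coincide in substance.
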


\begin{proof}
Let $\alpha \in \Omega^n(J^1M)$ be a first-order Lagrangian in the sense of the definition \ref{first-order Lagrangian - definition}, i.e. $\Delta_\alpha \phi = L\beta $, for some $L$ (possibly defined only locally) which depends smoothly on $\phi$ up to the first-order in derivatives, $L = L(q^\mu, \phi, \phi_\mu) $. Assume that the E-L equation for $L$ is given by $\Delta_\omega \phi = 0$. Define 
    \begin{equation*}
        \Phi_\alpha [\phi] : = \int_M \Delta_\alpha \phi = \int_M L\beta
    \end{equation*}
(consider only $\phi$ compactly supported). Without loss of generality, we may restrict $\alpha$~to be effective (see the discussion in the subsection with effective forms) and thus by proposition \ref{first-order Lagrangians - proposition}, we (locally) have $\alpha = \tilde{L}\beta$ for appropriate $\tilde{L} \in C^\infty (J^1M)$ satisfying $\tilde{L} \circ j^1 \phi = L$. Thus $\Phi_\alpha [\phi] = \Phi_{\tilde{L}\beta} [\phi]$ and, by the second statement of lemma \ref{lemma about the Euler operator end E-L equations}, we know that the E-L equation for the functional $\Phi_{\tilde{L}\beta} [\phi]$ is $\Delta_{\mathcal{E}(\tilde{L} \beta)} \phi = 0$. Since we assumed that $L$ locally solves the inverse variational problem given by the equation $\Delta_\omega \phi = 0$, and because $\omega$ and $\omega_\epsilon$ determine the same equation, we have 
    \begin{equation*}
        \Delta_{\omega_\epsilon} \phi = 0 \iff \Delta_{\mathcal{E}(\tilde{L} \beta)} \phi = 0 \ . 
    \end{equation*}
By the first statement of lemma \ref{lemma about the Euler operator end E-L equations}, $\mathcal{E}(\tilde{L} \beta)$ is an effective form. Since $\omega_\epsilon$ and $\mathcal{E}(\tilde{L} \beta)$ are effective forms determining the same equation, the corollary \ref{lemma for the necessary condition theorem} implies that the forms must differ by a multiple of a~non-vanishing function.
\end{proof}

\begin{remark}
Although we work locally in a coordinate system, notice that the necessary conditions for the existence of a solution to the inverse variational problem is, in our framework, a tensorial statement and thus independent of the choice of coordinates. 
\end{remark}

We present the following, simple example in $\dim M = 2$ to show how the proposition \ref{necessary condition for the existence of a solution to the inverse variational problem} can be used. 

\begin{example}{Example} \label{example}
Consider the $1$D wave equation (understand one of the two coordinates as time)
    \begin{equation}\label{wave equation in dim 1}
        \phi_{11} - c\phi_{22} = 0 \ ,
    \end{equation}
where $c > 0$ is a real constant, $\phi \colon M \to \mathbb{R}$, and $\dim M = 2$. We want to find $L(q^\mu, u, p_\mu) \in C^\infty (J^1U)$ s.t. the E-L equation for $(j^1\phi)^* L = L(q^\mu, \phi, \phi_\mu)$ is \eqref{wave equation in dim 1}.

The simple representation is 
    \begin{equation*}
        \omega = - c \D q^1 \wedge \D p_2 - \D q^2 \wedge \D p_1 \ . 
    \end{equation*}
We can easily see that $\Delta_\omega \phi = 0$ gives the original equation
    \begin{equation*}
        (j^1\phi)^* \omega = - c \D q^1 \wedge \D \phi_2 - \D q^2 \wedge d\phi_1 = (\phi_{11} - c \phi_{22}) \D q^1 \wedge \D q^2 \ . 
    \end{equation*}
The simple representation is effective, $\omega = \omega_\epsilon$, since it degenerates along $\chi$
    \begin{equation*}
        \chi \mathbin{\lrcorner} \omega = \partial_u \mathbin{\lrcorner} (- c \D q^1 \wedge \D p_2 - \D q^2 \wedge \D p_1) = 0 \ ,
    \end{equation*}
and belongs to the kernel of the bottom operator 
    \begin{equation*}
        \bot \omega = \partial_{p_\mu} \mathbin{\lrcorner} \partial_{q^\mu} \mathbin{\lrcorner} (- c \D q^1 \wedge \D p_2 - \D q^2 \wedge \D p_1) = c \partial_{p_1} \mathbin{\lrcorner} ( -\D p_2 ) - \partial_{p_2} \mathbin{\lrcorner} \D p_1 = 0 \ .
    \end{equation*}
The coordinate expression of the Euler operator evaluated on a general first-order Lagrangian $n$-form is given by \eqref{E (L beta) in local coordinates}. For $n = 2$ we have $\beta = \D q^1 \wedge \D q^2$ and $\beta_1 = \partial_{q^1} \mathbin{\lrcorner} \beta = \D q^2, \beta_2 = \partial_{q^2} \mathbin{\lrcorner} \beta = -\D q^1$, so \eqref{E (L beta) in local coordinates} becomes
    \begin{eqnarray*}
        \mathcal{E} (L \beta) & = \frac{\partial^2 L}{\partial {p_1}^2} \D q^2 \wedge d p_1 + \frac{\partial^2 L}{\partial p_1 \partial p_2} \D q^2 \wedge d p_2 - \frac{\partial^2 L}{\partial p_2 \partial p_1} \D q^1 \wedge d p_1 - \frac{\partial^2 L}{\partial {p_2}^2} \D q^1 \wedge d p_2 \\
        & \ \   - ( \frac{\partial^2 L}{\partial q^1 \partial p_1 } + \frac{\partial^2 L}{\partial q^2 \partial p_2 } +  p_1 \frac{\partial^2 L}{\partial u \partial p_1} + p_2 \frac{\partial^2 L}{\partial u \partial p_2} + \frac{\partial L}{\partial u} ) \D q^1 \wedge \D q^2 
    \end{eqnarray*}
We can fix the value of the function in \eqref{first-order Lagrangians - necessary condition} to be constant, say $k = 1$, since two forms which are multiple of each other by a smooth non-vanishing $k$ yields the same M-A equation. Hence we search for $L \in C^\infty(J^1M)$ such that $\omega = \mathcal{E} (L \beta)$, which implies 
    \begin{eqnarray*}
        \frac{\partial^2 L}{\partial {p_1}^2} = - 1 \ , \ \  &  \frac{\partial^2 L}{\partial {p_2}^2} = c \ , \ \ & \frac{\partial L}{\partial u} = \frac{\partial L}{\partial q^\mu} = 0 \ , \   \mu = 1,2 \ . 
    \end{eqnarray*}
Thus $L = L(p_\mu)$ and we can solve the first two conditions by the choice 
    \begin{equation*}
        L = \frac{1}{2}(-{p_1}^2 + c {p_2}^2) \ . 
    \end{equation*}
because the M-A equation $\Delta_{\mathcal{E} (L \beta)} \phi = 0$ writes
    \begin{equation*}
        \frac{\partial (j^1\phi)^* L}{\partial \phi} -  \frac{\partial}{\partial q^\mu} \frac{\partial (j^1\phi)^* L}{\partial \phi_\mu} = \phi_{11} - c \phi_{22} = 0 \ .
    \end{equation*}
We see that $(j^1\phi)^* L = \frac{1}{2}(-{\phi_1}^2 + c {\phi_2}^2)$ is a solution to the inverse problem for \eqref{wave equation in dim 1}. \qed    
\end{example}

\subsection{Plebański, Grant, and Husain equations.}

Proceeding in a similar fashion as in the previous example, we analysed both Plebański heavenly, Grant, and Husain equations in $\dim = 4$. The following tables summarize simple representations, show their non-effectivity and display effective parts of the simple representations of the aforementioned PDEs, $\phi$ being a real function. Since the effective forms of M-A equations in four dimensions tend to have lengthy expressions, we introduce the following shorthand notation, which also facilitate the computations. We denote
    \begin{align*}
        d^\mu & : = \D q^\mu, &  d_\mu & := \D p_\mu \ ,    
    \end{align*}
and for the wedge product, we write 
    \begin{align*}
        d^\mu_{\ \nu } & : = \D q^\mu \wedge \D p_\nu, & d^{\ \mu}_\nu & : = \D p_\nu \wedge \D q^\mu \ .
    \end{align*}
Notice that the position and order of indices matter and there are obvious relations such as $ d^\mu_{\ \nu } = - d^{\ \mu}_\nu$, or for the contractions $\partial_{q^\mu} \mathbin{\lrcorner} d^\nu = \delta^\mu_\nu$ (the Kronecker delta) and $\partial_{q^\mu} \mathbin{\lrcorner} d_\nu = \partial_{p_\mu} \mathbin{\lrcorner} d^\nu = 0$, et cetera. For example, the symplectic form is in the above notation written as $\Omega = d^1_{\ 1} + \ldots + d^n_{\ n}$, the volume form on $M$ is $\beta = d^{1234}$, and so on.

\begin{table}\label{table 1}
\begin{center}
\caption{Simple representations (which are not effective, $\bot \omega \neq 0$) of 1st Plebański (P1), 2nd Plebański (P2), Grant (G), and Husain (H) equations.} 
\begin{tabular}{ | p{2.55cm} | p{4.7cm} | p{4cm} | }
\hline & Monge-Ampère equation & simple representation  \vspace{3pt} \\ \hline 
 1st Plebański & $\phi_{13} \phi_{24}  - \phi_{14}\phi_{23} = 1$ & $\omega_{P1} = d^{12}_{\ \ 12} - d^{12}_{\ \ 34}$ \vspace{3pt} \\ 
 2nd Plebański & $\phi_{11} \phi_{22}  - (\phi_{12})^2 + \phi_{13} +  \phi_{24} = 0$ &  $\omega_{P2} = d^{123}_{\ \ \ 2} - d^{124}_{\ \ \ 1} + d^{34}_{\ \ 12}$  \vspace{3pt} \\
 Grant & $\phi_{11} + \phi_{24} \phi_{13} - \phi_{23} \phi_{14} = 0$ & $\omega_G = - d^{234}_{\ \ \ 1} - d^{12}_{\ \ 12} $  \vspace{3pt}\\
 Husain & $ \phi_{13} \phi_{24} - \phi_{14} \phi_{23} + \phi_{11} + \phi_{22} = 0$ & $ \omega_H = d^{134}_{\ \ \ 2} - d^{234}_{\ \ \ 1} + d^{12}_{\ \ 12}$  \vspace{3pt}\\ \hline
\end{tabular}
\end{center}
\end{table}

\begin{table}\label{table 2}
\begin{center}
\caption{Effective parts of simple representations of P1, P2, G, and H.} 
\begin{tabular}{ | p{2.55cm} | p{8.7cm} |  }
\hline & effective form $\omega_\epsilon$ \vspace{3pt} \\ \hline 
1st Plebański & $\omega_{P1\epsilon} = - d^{1234} + \frac{1}{3}( d^{12}_{\ \ 12} + d^{34}_{\ \ 34}) - \frac{1}{6}( d^{13}_{\ \ 13} + d^{14}_{\ \ 14} + d^{23}_{\ \ 23} + d^{24}_{\ \ 24} )$ \vspace{3pt} \\ 
2nd Plebański & $\omega_{P2\epsilon} = \frac{1}{2} (d^{124}_{\ \ \ 1} + d^{123}_{\ \ \ 2} + d^{234}_{\ \ \ 3} + d^{134}_{\ \ \ 4}) + d^{34}_{\ \ 12} $  \vspace{3pt} \\
Grant & $\omega_{G\epsilon} = - d^{234}_{\ \ \ 1} + \frac{1}{3}( d^{12}_{\ \ 12} + d^{34}_{\ \ 34} )  - \frac{1}{6}( d^{13}_{\ \ 13} + d^{14}_{\ \ 14} + d^{23}_{\ \ 23} + d^{24}_{\ \ 24} )$ \vspace{3pt}\\
Husain & $\omega_{H\epsilon} = d^{134}_{\ \ \ 2} - d^{234}_{\ \ \ 1} +  d^{12}_{\ \ 12} + d^{34}_{\ \ 34}  - \frac{1}{2}( d^{13}_{\ \ 13} + d^{14}_{\ \ 14} + d^{23}_{\ \ 23} + d^{24}_{\ \ 24})$  \vspace{3pt}\\ \hline
\end{tabular}
\end{center}
\end{table}


Proposition \ref{necessary condition for the existence of a solution to the inverse variational problem} yields the following result. 
\begin{corollary}
Monge-Ampère equations from table \ref{table 1} do not correspond to a variational problem of a first-order Lagrangian function. 
\end{corollary}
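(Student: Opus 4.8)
The plan is to argue by contradiction via Proposition~\ref{necessary condition for the existence of a solution to the inverse variational problem}. If any of the four equations in Table~\ref{table 1} admitted a first-order Lagrangian function, then the necessary condition \eqref{first-order Lagrangians - necessary condition}, namely $k\omega_\epsilon = \mathcal{E}(\tilde{L}\beta)$ for some non-vanishing $k \in C^\infty(J^1M)$, would have to hold, where $\omega_\epsilon$ is the corresponding effective form recorded in Table~\ref{table 2}. I will show that this equality is impossible for every choice of $\tilde{L}$ by comparing which monomial types can occur on each side.

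First I would read off the general shape of the right-hand side from the coordinate expression \eqref{E (L beta) in local coordinates}. In dimension four the Hessian term $\frac{\partial^2 L}{\partial p_\mu \partial p_\nu}\,\beta_\mu \wedge \D p_\nu$ is a linear combination of monomials carrying three $\D q$ factors and a single $\D p$ factor, i.e.\ of type $d^{ijk}_{\ \ \ \nu}$, while the remaining term is a multiple of $\beta = d^{1234}$. Hence, regardless of $\tilde{L}$, the image $\mathcal{E}(\tilde{L}\beta)$ contains \emph{only} monomials of these two types and, in particular, never contains a monomial of type $d^{ij}_{\ \ kl}$ (two $\D q$ and two $\D p$ factors). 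The absence of these ``mixed middle'' monomials in the image of the Euler operator is the crux of the whole argument.

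Next I would inspect the effective forms in Table~\ref{table 2} and observe that each one does contain at least one $d^{ij}_{\ \ kl}$ monomial: for example $d^{12}_{\ \ 12}$ occurs in $\omega_{P1\epsilon}$, $\omega_{G\epsilon}$ and $\omega_{H\epsilon}$, while $d^{34}_{\ \ 12}$ occurs in $\omega_{P2\epsilon}$. Since the wedge monomials in $\D q^\mu, \D p_\mu$ form a basis of $\Omega^4(\mathcal{C})$, the putative equality $k\omega_\epsilon = \mathcal{E}(\tilde{L}\beta)$ can be examined coefficient by coefficient; matching the $d^{ij}_{\ \ kl}$ coefficients forces $k$ times a nonzero constant to equal the identically vanishing $d^{ij}_{\ \ kl}$ coefficient of $\mathcal{E}(\tilde{L}\beta)$, hence $k \equiv 0$, contradicting the non-vanishing of $k$. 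This rules out a first-order Lagrangian for all four equations simultaneously.

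The only point I would state with care, and the nearest thing to an obstacle, is that $k$ is an arbitrary non-vanishing \emph{function} rather than a constant, so one must confirm that a functional multiple still cannot eliminate the offending monomials; this is immediate since multiplication by a nowhere-zero scalar only rescales each coefficient and preserves its support. Thus there is no genuine difficulty at this stage: the substantive computational work was the determination of the effective parts already tabulated in Table~\ref{table 2}, after which the corollary reduces to the elementary comparison of monomial types carried out above.
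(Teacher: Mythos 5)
Your proposal is correct and follows essentially the same route as the paper's proof: invoke Proposition~\ref{necessary condition for the existence of a solution to the inverse variational problem}, observe from \eqref{E (L beta) in local coordinates} that $\mathcal{E}(\tilde{L}\beta)$ contains only monomials of type $d^{ijk}_{\ \ \ \nu}$ and multiples of $\beta$, and note that every effective form in Table~\ref{table 2} contains a $d^{ij}_{\ \ kl}$ term, which no non-vanishing multiple $k$ can produce. Your extra remark on why a functional (rather than constant) $k$ cannot help is a useful explicit detail that the paper leaves implicit, but it does not change the argument.
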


\begin{proof}
Table \ref{table 2} shows the effective forms of Monge-Ampère equations under consideration. In all cases, the effective form contains at least one term of the form $d^{\mu \nu}_{\ \ \xi \eta}$. These terms do not occur in the expression \eqref{E (L beta) in local coordinates}. Thus the necessary condition for the existence of a~first-order Lagrangian, given by the proposition \ref{necessary condition for the existence of a solution to the inverse variational problem}, is not satisfied. 
\end{proof}

We want to emphasize here that although the Plebański heavenly, Grant, and Husain equations do not have a first-order Lagrangian for which they would be E-L equations, in a different setup a Lagrangian can be found \cite{Sheftel-Multi-hamiltonian-Plebanski2nd, Nutku_1996}. Let us consider the second heavenly equation
    \begin{equation}\label{second heavenly}
        \phi_{11} \phi_{22}  - (\phi_{12})^2 + \phi_{13} +  \phi_{24} = 0 \ .
    \end{equation}
If we single-out one coordinate among $q^1, \dots, q^4$, say $q^1$, and introduce a new function $\psi$, then we can write \eqref{second heavenly} as an evolution system in $q^1$
    \begin{align}\label{reduced Plebanski2}
        \psi - \phi_1 & = 0 \ , \\
        \psi_1 \phi_{22} - {\psi_2}^2 + \psi_3 + \phi_{24} & = 0 \ ,
    \end{align}
Interestingly, the above system is a variational problem, since it is given by the E-L equations 
    \begin{eqnarray*}
        \frac{\partial L}{\partial \phi} - \frac{\partial }{\partial q^{\mu}} \frac{\partial L}{\partial \phi_{\mu}} + \frac{\partial^2 }{\partial q^{\mu} \partial q^{\nu}} \frac{\partial L}{\partial \phi_{\mu \nu} } & = 0 \ , \\ 
        \frac{\partial L}{\partial \psi} - \frac{\partial }{\partial q^{\mu}} \frac{\partial L}{\partial \psi_{\mu}} + \frac{\partial^2 }{\partial q^{\mu} \partial q^{\nu}} \frac{\partial L}{\partial \psi_{\mu \nu} } & = 0 \ , 
    \end{eqnarray*}
of the functional 
    \begin{equation}
    L [\phi, \psi] = \psi \phi_1 \phi_{22} + \frac{1}{2} \phi_1 \phi_3 - \frac{1}{2} \psi^2 \phi_{22} + \frac{1}{2} \phi_2 \phi_4 \ .
    \end{equation}
In \cite{Nutku_1996}, a method for treating the general case of Monge-Ampère equations is provided, together with systematic approach of finding Lagrangians for them after the decomposition into an evolution system. For further details regarding the above case, see \cite{Sheftel-Multi-hamiltonian-Plebanski2nd}. 

The following example shows an equation which has a first-order Lagrangian, the corresponding effective form does not have constant coefficients, and is not a differential form over the cotangent bundle. We will see that the conditions of proposition \ref{necessary condition for the existence of a solution to the inverse variational problem} are satisfied.

\subsection{Klein-Gordon equation.}

Let $M$ be a four-dimensional Minkowski spacetime with coordinates $q^\mu$ and flat metric $\eta_{\mu \nu}$ with signature $(+,-,-,-)$. Consider the (linear) Klein-Gordon equation
    \begin{equation}\label{Klein-Gordon equation}
        \phi_{11} - \phi_{22} - \phi_{33} - \phi_{44} + m^2 \phi^2 = 0 \ , 
    \end{equation} 
where $m \in \mathbb{R}$ is a constant. We can describe \eqref{Klein-Gordon equation} as a M-A equation $\Delta_\omega \phi = 0$ via the form
    \begin{equation*}
        \omega = -\beta_1 \wedge \D p_1 + \sum_{\mu = 2}^4 \beta_\mu \wedge \D p_\mu + m^2 u \beta \ . 
    \end{equation*}
This $4$-form is not a simple representation of \eqref{Klein-Gordon equation}, due to the non-constant coefficient $ m^2 u$, but it is an effective form, see the example \ref{effective forms - example}. Comparing $\omega$ with the local form of $\mathcal{E}(L\beta)$ for general $L$ (see \eqref{E (L beta) in local coordinates}), we obtain the following set of conditions
    \begin{eqnarray*}
        \eta_{\mu \nu} & = \frac{\partial^2 L}{\partial p_\mu \partial p_\nu}  ,\  \mu, \nu = 1, \ldots 4  \ ,  \\ 
        - m^2 u & = \frac{\partial^2 L}{\partial q^\mu \partial p_\mu} + p_\mu \frac{\partial^2 L}{\partial u \partial p_\mu} - \frac{\partial L}{\partial u} \ .
    \end{eqnarray*}
One can easily check that the function $L$
    \begin{equation*}
        L = \frac{1}{2}(-{p_1}^2 + \sum_{\mu = 2}^4 {p_\mu}^2 +  m^2 u^2)  \in C^\infty(J^1M)
    \end{equation*}
satisfies all the above conditions. It follows that 
    \begin{equation*}
         (j^1 \phi)^* L = \frac{1}{2}(-{\phi_1}^2 + \sum_{\mu = 2}^4 {\phi_\mu}^2 +  m^2 \phi^2)
    \end{equation*}
is a first-order Lagrangian for the Klein-Gordon equation. 



\section{Multisymplectic formulation}

In \cite{multisymplectic-formalims-and-covariant-phase-space} F. Hélein provided a multisymplectic formulation of the Klein-Gordon equation \eqref{Klein-Gordon equation} (in dimension $n$) over $\mathcal{M}:= \Lambda^n T^* (M \times \mathbb{R})$, equipped with the multisymplectic form
    \begin{equation}\label{multisymplectic form of Frederic}
        \mathfrak{m} : = \D e \wedge \beta + \D p_\mu \wedge \D \phi \wedge \beta_\mu \ ,
    \end{equation}
where $e$ is a fiber coordinate of the trivial line bundle $M \times \mathbb{R} \to M$, $p_\mu$ are the cotangent coordinates, $\beta = \D q^1 \wedge \ldots \wedge \D q^n $ and $\beta_\mu = \partial_{q^\mu} \mathbin{\lrcorner} \beta$, with $q^\mu$ coordinates on a $n$-dimensional Minkowski spacetime $M$. Using \eqref{multisymplectic form of Frederic}, the following Hamiltonian function on $\mathcal{M}$ is defined in such a way to correspond to solutions of \eqref{Klein-Gordon equation}
    \begin{equation*}
        \mathcal{H}:= e + \frac{1}{2} \eta_{\mu \nu} p_\mu p_\nu + \frac{1}{2} m^2 \phi^2 \ ,
    \end{equation*}
where $\eta_{\mu \nu} $ is the Minkowski metric with signature $(+,-, \ldots, -)$. 
Each solution of \eqref{Klein-Gordon equation} is then interpreted as a Hamiltonian $n$-curve, defined by equations 
    \begin{align*}
        p_\mu &  = \eta^{\mu \nu} \phi_{\nu}, \mu = 1, \ldots, n \ , \\
        e & = - \frac{1}{2} \eta^{\mu \nu} \phi_\mu \phi_\nu -  \frac{1}{2} m^2 \phi^2 \ , 
    \end{align*}
where $\eta^{\mu \nu}$ is the inverse to $\eta_{\mu \nu} $. In the aforementioned paper, F. Hélein provided a~canonical pre-quantization of the Klein-Gordon equation, and defined the notion of observables together with their brackets, which give rise to an infinite dimensional analogue of the Heisenberg algebra. The starting point of the method is the existence of a Lagrangian, which in the context of the Klein-Gordon equation is a first-order one. For more details see \cite{multisymplectic1,ObservableFormsAndFunctionals,  multisymplectic-formalims-and-covariant-phase-space}. 


The following theorem is due to D. Harrivel. It enables us to associate to certain effective forms on $J^1M$ their (non-unique) multisymplectic counterpart on the trivial line bundle over $J^1M$. The proof can be found in \cite{Dika}. Note that the key difference with respect to the previous multisymplectic formulation of F. Hélein is that them multisymplectic form can be associated with Monge-Ampère equations which are not variational, that is, equations which are not Euler-Lagrange for some first-order Lagrangian. As we have seen in the previous section, this is the case for all the equations in table \ref{table 1}. 
\begin{theorem}\label{multisymplectic from effective theorem}
Let $\omega \in \Omega^n (\mathcal{C})$ be an effective form, $n = \dim M$. Consider a trivial line bundle $\mathcal{T} : = J^1M \times \mathbb{R} \to J^1M$ with fiber coordinate $e$. Define $\mathfrak{m}_\omega \in \Omega^{n+1} (\mathcal{T})$ by 
    \begin{equation}\label{multisymplectic form from the effective one}
        \mathfrak{m}_\omega : = \D e \wedge \beta + \mathfrak{c} \wedge \omega \ .
    \end{equation}
Then $\mathfrak{m}_\omega$ is a multisymplectic form if and only if
    \begin{enumerate}
        \item The set $\mathcal{S}_\omega : = \{\partial_{q^1} \mathbin{\lrcorner} \omega, \ldots, \partial_{q^n} \mathbin{\lrcorner} \omega \} $ is linearly independent over $\Omega^{n-1} (\mathcal{C})$, and,
        \item $\D_p \omega = 0$. 
    \end{enumerate}
\end{theorem}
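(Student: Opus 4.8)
The plan is to unwind the definition of a multisymplectic form: $\mathfrak{m}_\omega \in \Omega^{n+1}(\mathcal{T})$ is multisymplectic precisely when it is closed and $1$-nondegenerate, i.e. the bundle map $X \mapsto X \mathbin{\lrcorner} \mathfrak{m}_\omega$ has trivial kernel. I would prove the equivalence by treating the two requirements separately, matching condition (2) to closedness and condition (1) to non-degeneracy.

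For closedness I would compute $\D \mathfrak{m}_\omega$ directly. Since $\D \beta = 0$ and $\D \mathfrak{c} = \Omega$, the graded Leibniz rule gives $\D \mathfrak{m}_\omega = \Omega \wedge \omega - \mathfrak{c} \wedge \D \omega$. Because $\omega \in \Omega^n(\mathcal{C})$ is effective with $k = n$, the remark following Definition~\ref{effective forms - definiton via bottom operator} shows $\bot\omega = 0 \iff \Omega \wedge \omega = 0$, so the first term drops and $\D \mathfrak{m}_\omega = -\mathfrak{c} \wedge \D \omega$. It then remains to show $\mathfrak{c} \wedge \D \omega = 0 \iff \D_p \omega = 0$. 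The forward direction is immediate from $\D_p \omega = \D\omega - \mathfrak{c}\wedge(\chi \mathbin{\lrcorner} \D\omega)$ together with $\mathfrak{c}\wedge\mathfrak{c} = 0$; for the converse I would use that the kernel of $\mathfrak{c}\wedge(\cdot)$ consists exactly of forms divisible by $\mathfrak{c}$, write $\D\omega = \mathfrak{c}\wedge\gamma$, and contract with $\chi$ (using $\mathfrak{c}(\chi)=1$) to recover $\D\omega = \mathfrak{c}\wedge(\chi\mathbin{\lrcorner}\D\omega)$, i.e. $\D_p\omega = 0$. This settles closedness $\iff$ (2).

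For non-degeneracy I would take a general vector field $X = a\,\partial_e + b\,\partial_u + c^\mu \partial_{q^\mu} + d_\mu \partial_{p_\mu}$ on $\mathcal{T}$ and expand $X \mathbin{\lrcorner} \mathfrak{m}_\omega$, using $\partial_e \mathbin{\lrcorner}\omega = 0$ and $\chi \mathbin{\lrcorner}\omega = \partial_u\mathbin{\lrcorner}\omega = 0$ (effectivity rules out any $\D u$ term in $\omega$). Sorting the result according to whether a term contains $\D e$ or $\D u$, the equation $X\mathbin{\lrcorner}\mathfrak{m}_\omega = 0$ decouples into three groups: the $\D e$-part isolates the horizontal components through the always independent forms $\beta_\mu$, the $\D u$-part produces a single relation of the shape ``$\sum(\text{component})\cdot(\text{element of }\mathcal{S}_\omega) = 0$'', and the remaining part (free of $\D e,\D u$) ties $a$ and $b$ to $\beta$ and $\omega$. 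Condition (1), the linear independence of $\mathcal{S}_\omega$ over $\Omega^{n-1}(\mathcal{C})$, is exactly what forces the corresponding components of $X$ to vanish; running the argument in reverse, a nontrivial dependence in $\mathcal{S}_\omega$ yields an explicit nonzero kernel vector, so the condition is also necessary.

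I expect the non-degeneracy equivalence to be the main obstacle. The closedness computation is a short application of the Leibniz rule plus the projection identity, whereas non-degeneracy requires careful bookkeeping of how the three groups of terms interact, and in particular a clean argument that the independence encoded by $\mathcal{S}_\omega$ cannot be compensated by the $\D e \wedge \beta$ summand. A subtle point to verify is that $\beta$ and $\omega$ are themselves independent, so that the residual group forces $a = b = 0$; this holds as soon as $\omega$ genuinely carries a momentum ($\D p$) contribution, which is the case for every effective representation of a nontrivial Monge--Ampère equation.
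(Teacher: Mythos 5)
First, a caveat on the comparison itself: the paper does \emph{not} prove Theorem \ref{multisymplectic from effective theorem}; it quotes it from Harrivel and defers the proof to \cite{Dika}, so your proposal has to be judged on its own merits rather than against an internal argument. Your closedness half is correct and complete: $\D\mathfrak{m}_\omega = \Omega\wedge\omega - \mathfrak{c}\wedge\D\omega$, the term $\Omega\wedge\omega$ vanishes because an effective form of top degree $k=n$ is primitive (the remark after Definition \ref{effective forms - definiton via bottom operator}), and your divisibility argument showing $\mathfrak{c}\wedge\D\omega = 0 \iff \D_p\omega = 0$ is sound.

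The genuine gap is in the non-degeneracy half, precisely at the step ``the $\D u$-part produces a relation among elements of $\mathcal{S}_\omega$.'' Carry out your own expansion: with $X = a\,\partial_e + b\,\partial_u + c^\mu\partial_{q^\mu} + d_\mu\partial_{p_\mu}$ and $\chi\mathbin{\lrcorner}\omega = 0$,
\begin{equation*}
X \mathbin{\lrcorner} \mathfrak{m}_\omega \ = \ a\beta \ + \ (b - p_\mu c^\mu)\,\omega \ - \ \D e \wedge (c^\mu \beta_\mu) \ - \ \mathfrak{c}\wedge\bigl( c^\mu\, \partial_{q^\mu}\mathbin{\lrcorner}\omega + d_\mu\, \partial_{p_\mu}\mathbin{\lrcorner}\omega \bigr) \ .
\end{equation*}
The $\D e$-block forces $c^\mu = 0$, exactly as you say; but then every occurrence of an element of $\mathcal{S}_\omega$ carries a coefficient already known to vanish, so $\mathcal{S}_\omega$ drops out of the problem entirely. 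What the $\mathfrak{c}$-block (equivalently, the $\D u$-block) actually yields is $d_\mu\, \partial_{p_\mu}\mathbin{\lrcorner}\omega = 0$, a relation among the \emph{momentum} contractions, and the residual block yields $a\beta + b\omega = 0$. Hence, with the standard definition you adopt, non-degeneracy of $\mathfrak{m}_\omega$ is equivalent to pointwise linear independence of $\{\partial_{p_1}\mathbin{\lrcorner}\omega, \ldots, \partial_{p_n}\mathbin{\lrcorner}\omega\}$; this also disposes of your worry about the residual block, since $b\neq 0$ would force $\omega$ proportional to $\beta$ and hence kill all $\partial_{p_\mu}\mathbin{\lrcorner}\omega$. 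Independence of $\mathcal{S}_\omega$ as literally stated is neither sufficient nor necessary for this. For instance, in $n=2$ the form $\omega = \D q^1\wedge\D q^2 + \D q^1\wedge\D p_2$ (representing $1 + \phi_{22} = 0$) is effective with constant coefficients and satisfies both (1) and (2), yet $\partial_{p_1}\mathbin{\lrcorner}\mathfrak{m}_\omega = 0$; conversely, $\omega = \D p_1\wedge\D p_2$ is effective and closed, $\mathfrak{m}_\omega$ is closed and non-degenerate, yet $\mathcal{S}_\omega = \{0,0\}$ is dependent. So the equivalence you set out to prove fails for the statement read at face value: either the contractions in condition (1) must be taken along $\partial_{p_\mu}$ rather than $\partial_{q^\mu}$ (a transposition relative to what the geometry demands, presumably resolved in \cite{Dika} by its own conventions), or ``multisymplectic'' is meant in a non-standard sense. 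In either case your matching of condition (1) to the $\D u$-block is the step that breaks, and it is the crux of your argument; the honest version of your computation proves a corrected statement, not the quoted one.
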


Once an equations has a simple representation, the corresponding effective form has constant coefficients, and thus the second assumption of theorem \ref{multisymplectic from effective theorem} is trivially satisfied since $\D_p = p \circ \D$. The linear independence of the set $\mathcal{S}_\omega$ in the case of $4$D equations is decided over $\binom{\dim \mathcal{C}}{\dim M -1} = \binom{8}{3} = 56$-dimensional space of $3$-forms on $\mathcal{C}$. In all our cases, this can be determined almost without computation.

\subsection{Plebański, Grant, and Husain equations.} 

For the first heavenly equation we have 
    \begin{equation*}
        \mathcal{S}_{P1} = \{ -d^{234} + x , d^{134} + y, - d^{124} + z, d^{123} + w\} \ , 
    \end{equation*}
where $x,y,z,w$ are linear combinations of $d^{\mu}_{\ \nu \xi}$, for appropriate $\mu,\nu,\xi$. We see that $\mathcal{S}_{P1} $ is linearly independent. Similarly for the second heavenly equation
    \begin{eqnarray*}
        \mathcal{S}_{P2} = \{ &  \frac{1}{2}(d^{24}_{\ \ 1} + d^{23}_{\ \ 2} + d^{34}_{\ \ 4}) + d^{34}_{\ \ 2} ,  \frac{1}{2}(-d^{14}_{\ \ 1} - d^{13}_{\ \ 2} + d^{34}_{\ \ 3}) - d^{34}_{\ \ 1}, \\  & \frac{1}{2}(d^{12}_{\ \ 2} - d^{24}_{\ \ 3} - d^{14}_{\ \ 4}) + d^{4}_{\ 12},  \frac{1}{2}(d^{12}_{\ \ 1} + d^{23}_{\ \ 3} + d^{13}_{\ \ 4}) + d^{3}_{\ 12} \} \ , 
    \end{eqnarray*}
which is a linearly independent set as the simple terms are all different. It is not difficult to check that the sets $\mathcal{S}_G$ and $\mathcal{S}_H$ for Grant and Husain equations, respectively, are also linearly independent. Thus the 5-form $\mathfrak{m}_\omega$ is a multisymplectic form on $J^1M \times \mathbb{R}$ in all the four cases described in table \ref{table 1}. 

\subsection{Klein-Gordon equation.} 

Interestingly, and in contrast with the Plebański, Grant, and Husain equations, the $5$-form for the Klein-Gordon equation defined by \eqref{multisymplectic form from the effective one} is not a~multisymplectic form. To see this, take the differential $4$-form
    \begin{equation*}
        \omega = -\beta_1 \wedge \D p_1 + \sum_{\mu = 2}^4 \beta_\mu \wedge \D p_\mu + m^2 u \beta \ , 
    \end{equation*}
which, as we already discussed, is effective and represents \eqref{Klein-Gordon equation} as a Monge-Ampère equation $\Delta_\omega \phi = 0$. Due to the non-constant $m^2 u $ term, the exterior derivative gives 
    \begin{equation*}
        \D \omega = m^2 \D u \wedge \beta \ , 
    \end{equation*}
which is not degenerate along the Reeb field. Thus $\D_p \neq \D$ and we have to project the form down to $\Omega (\mathcal{C})$ (see \eqref{projection operator} for the definition of $p$)
    \begin{equation*}
        \D_p \omega = m^2 (\D u \wedge \beta - \mathfrak{c} \wedge \chi \mathbin{\lrcorner} ( \D u \wedge \beta) ) = m^2 p_\mu \D q^\mu \wedge \beta \ . 
    \end{equation*}
We see that the second condition of the theorem \ref{multisymplectic from effective theorem} is not satisfied and thus $\mathfrak{m}_\omega$ given by \eqref{multisymplectic form from the effective one} is not a multisymplectic form. Notice that the first condition of the theorem is not violated as the set $\mathcal{S}_\omega$ is linearly independent.


\section{Conclusion and discussion}

In this work, we mainly focused on the following two questions. Firstly, can we decide whether a first-order Lagrangian for a given Monge-Ampère equation exists? Secondly, motivated by the work of F. Hélein \cite{multisymplectic-formalims-and-covariant-phase-space} and D. Harrivel \cite{Dika}, can we associate a multisymplectic form to equations which are not variational with respect to a first-order Lagrangian?

\hfill 

Regarding the first question, we provided a partial answer by formulating a necessary condition for the existence of a local solution to this inverse variational problem. This was done by representing a given equation by an effective differential form over the first jet space, and comparing it with an $n$-form that produces Euler-Lagrange equation for a~general, first-order Lagrangian function. 

Comparing the effective forms yields a computationally straightforward and simple method for obtaining a non-trivial information about Monge-Ampère equations in the context of strong inverse variational problems. Using the method, we showed that Plebański heavenly equations, Grant equation and Husain equation are not variational in our sense. Recall that the first heavenly equation is equivalent with the Grant equation after appropriate change of coordinates \cite{Grant1993}. Using a similar approach, we have shown (as expected) that the Klein-Gordon equation is variational by finding the well-known Lagrangian for it. The hypothesis is that the self-duality conditions imposed to derive the previous four equations creates an obstruction for the existence of the first-order Lagrangian. We want to study this problematics in more detail in our future work.

The presented method is much more suitable for deciding the non-variational nature of a given equation than solving the local inverse problem explicitly. Moreover, it works only when restricted to the case of first-order Lagrangians. Nevertheless, this limitation can be seen as desirable, since the first-order Lagrangians are of great importance throughout the physics. 

It is not clear at the moment how to generalize our approach to the case of more functions. The procedure can be naively extended for more scalar fields by introducing multiple Euler operators, the cost being degeneracy issues. This causes further problems, for example in the context of the unique decomposition of differential forms into the effective and non-effective part, which is an essential tool in our approach. In \cite{BANOS2011}, B. Banos used the notion of bi-effective forms to efficiently deal with the complex Monge-Ampère equations, and proved the possibility to always obtain a unique bieffective decomposition. This is not equivalent in an obvious way to the aforementioned naive extension, as the Verbitsky-Bonan relations are not satisfied in our case (see \cite{BANOS2011}, Theorem 1). This is connected with the fact that we do not restrict our forms to have coefficients independent of the $u$ coordinate on $J^1M$ (which allows us to work, for example, with the Klein-Gordon equation). Whether this problems can be resolved will be part of our future investigations.

\hfill

Regarding the second question focused on the multisymplectic formulation of Monge-Ampère equations. Using the results of \cite{Dika}, we provided multisyplectic $5$-forms in the case of real $4$-dimensional heavenly Plebański, Grant, and Husain equations, all of which are not variational in our sense. Interestingly, the same approach does not work for the Klein-Gordon equation as the corresponding $5$-form is not multisymplectic. 

F. Hélein's multisymplectic treatment of the Klein-Gordon equation provided in \cite{multisymplectic-formalims-and-covariant-phase-space} starts with a first-order Lagrangian function. The other four Monge-Ampère equations we studied cannot be treated in the same way, unless going into higher order Lagrangians. On the other hand, the theorem \ref{multisymplectic from effective theorem} provides a multisymplectic forms exactly for the four non-variational cases and fails for the Klein-Gordon equation. To provide some explanation of this, it would be interesting to compare the methods of \cite{multisymplectic1, multisymplectic-formalims-and-covariant-phase-space} with those in \cite{Dika} in the situation of a general Monge-Ampère equation.


\section*{Acknowledgment}

This paper was written during my visit in Angers as a part of my PhD research, under the cotutelle agreement between the Masaryk University, Brno, Czech Republic, and the University of Angers, France.  I am grateful for the funding provided by the Czech Ministry of Education, and by the Czech Science Foundation under the project GAČR EXPRO GX19-28628X, and I thank the University of Angers for the hospitality during the research period. I also want to express my gratitude to Volodya Rubtsov for his numerous valuable suggestions and detailed comments, and to Jan Slovák for clarification of concepts from the theory of jet bundles. The results were reported at the Winter School and Workshop Wisla 20-21, a European Mathematical Society event organized by the Baltic Institute of Mathematics.


\end{document}